%% file: main.tex
\documentclass[11pt]{article} 

\usepackage[utf8]{inputenc} 

\usepackage{cancel}

\usepackage{geometry} 
\usepackage{graphicx} 
\usepackage[english]{babel}

\usepackage{booktabs} 
\usepackage{array} 
\usepackage{paralist} 
\usepackage{verbatim} 
\usepackage{subfig} 
\usepackage{amssymb}
\usepackage{bbm}

\usepackage{amsmath}
\usepackage{amsthm}
\usepackage{makecell}

\usepackage{epigraph}

\usepackage{hyperref}
\hypersetup{
    colorlinks=true,
    citecolor=blue
}

\usepackage{pdflscape}
\usepackage{natbib}
\usepackage{algorithm2e}

\usepackage{wrapfig}
\usepackage{subfig}
\usepackage[capitalize,noabbrev]{cleveref}
\usepackage{authblk}
\usepackage{changepage}

\usepackage{tikz-cd}
\usepackage{xcolor}

\input{macros}

\newtheorem{theorem}{Theorem}[section]
\newtheorem{definition}[theorem]{Definition}

\newtheorem{proposition}[theorem]{Proposition}

\newtheorem{corollary}[theorem]{Corollary}
\newtheorem{lemma}[theorem]{Lemma}
\theoremstyle{definition}
\newtheorem{example}[theorem]{Example}

\Crefname{fact}{Fact}{Facts}
\Crefname{assumption}{Assumption}{Assumptions}

\title{M-estimation with e-statistics}

\author[1]{Hongjian Wang}
\author[2]{Aaditya Ramdas}
\affil[1, 2]{Department of Statistics and Data Science, Carnegie Mellon University}
\affil[2]{Machine Learning Department, Carnegie Mellon University} 
\affil[ ]{\texttt{ \{hjnwang,aramdas\}@cmu.edu  }}
\date{\today}

\begin{document}

\maketitle

\begin{abstract}
   We present a theory of point estimation with e-statistics (e-values and e-processes) by introducing the ``ME-estimator'': the parameter that minimizes the corresponding e-statistic, or the evidence against it.
   Our approach is based on the intuitive idea of e-statistics as a measure of evidence and betting pay-off, and naturally generalizes the classical method of maximum likelihood estimation.
 First, we establish the consistency as well as the almost sure convergence rate for ME-estimators relating to the high-probability bounds on the size of the confidence set derived from thresholding the e-statistics, an approach that sets ME-estimators apart from traditional M-estimators. Second, we conduct classical M-estimator-style analysis on the consistency and asymptotic normality of ME-estimators in the bounded mean estimation setting, discussing the notion of efficiency (or lack thereof) from various choices of betting strategy. Our work brings e-statistics, a fundamental tool for inference and uncertainty quantification, to the space of estimation.
\end{abstract}


\section{Introduction}

E-values and e-processes, which we collectively refer to as ``e-statistics'', have become the cornerstone of modern safe, anytime-valid statistical inference \citep{ramdas2023game}. An e-value is a nonnegative statistic that has expected value $\le 1$ under any null distribution, and an e-process is formed by a sequence of statistics that yields an e-value at any stopping time. Recent literature has demonstrated that e-statistics are central in both fixed-time and sequential testing and uncertainty quantification. In this paper, we explore an aspect of e-statistics that has barely been visited by the past literature: can we form \emph{point estimators} from e-statistics that enjoy both conceptual clarity and theoretical guarantees? We answer the question positively by formulating a form of M-estimators with e-statistics.

We set the stage with a parametric statistical model.
Let $\Theta$ be a space of statistical parameters. For each $\theta \in \Theta$, there is a set of data-generating distributions $\cP_{\theta}$. Probabilistic statements that hold true under all $P \in \cP_\theta$ are denoted with symbols $\Pr_{\theta}$ and $\Exp_\theta$. There is also an unknown ``ground truth'' distribution, denoted by $P^*$ and belonging to $\cP_{\theta^*}$, which generates an infinite stream of i.i.d.\ observations $(X_n) = X_1, X_2 , X_3 \dots$. Sometimes, we switch to notations $\Pr$ and $\Exp$ when stating generic probability statements independent of our statistical framework.

For each $\theta \in \Theta$ and sample size $n \ge 1$ suppose we can construct a statistic $M_n^\theta$ from $X_1,\dots, X_n \iid P^*$ that
\begin{itemize}
    \item is an \emph{e-value}, i.e.\ $M_n^\theta \ge 0$ and $\Exp_\theta M_n^\theta \le 1$; or, additionally,
    \item forms an \emph{e-process}, i.e.\ $\Exp_\theta M_\tau^\theta \le 1$ for any stopping time $\tau$ on the data filtration.
\end{itemize}
We study the following estimator for the ground truth $\theta^*$ in this paper:
\begin{definition}[ME-estimator] \label[definition]{def:me}
    We let $\tilde\theta_n^{(E)}$ be any minimizer of e-statistics:
    \begin{equation}\label{eqn:e-estimator}
        \tilde \theta_n^{(E)} \in \argmin_{\theta \in \Theta} M_n^\theta,
    \end{equation}
    and call it a \emph{ME-estimator} (pronounced as ``mee'') derived from the family of e-statistics $M_n^\theta$. If it happens to be unique, we will call it \emph{the} ME-estimator.
\end{definition}
The name ``ME'' stands for ``minimizing e-statistic'' or ``minimizing evidence''.
Here, the $\argmin$ is a nonempty set (with possibly multiple elements, tie broken arbitrarily), often due to the compactness of $\Theta$ and the continuity $\theta \mapsto M_n^\theta$. When the minimum uniquely exists, we shall use ``$=$'' instead of ``$\in$'' in \eqref{eqn:e-estimator}. ME-estimators are related to the concept of the ``e-posterior'' studied by \cite{grunwald2023posterior}, where the ``e-posterior minimax rule'' with the negative indicator loss function $L(\theta, \theta') = -\id_{ \{ \theta = \theta'\} }$ defines an action equivalent to our ME-estimator \eqref{eqn:e-estimator}, for settings in which the e-posterior is well-behaved.

One may also recall M-estimators in the classical statistics literature. However, due to the e-statistics, ME-estimators \eqref{eqn:e-estimator} sometimes enjoy additional properties and analysis shortcuts unseen in the M-estimation literature, and sometimes admit the classical M-estimator analysis in a way that reflects the power of the e-statistics. ME-estimators are not immediately subsets of M-estimators considered in the literature, as $M^\theta_n$, even when applied to i.i.d.\ data, often contains unconventional martingale dependence structure introduced by the sequential testing strategy. We discuss the detailed comparison to M-estimators in \cref{sec:comp-to-m-estimators}.

\paragraph{Contributions} 
The main contributions of this paper are the demonstration of
these fundamental properties of $ \tilde \theta_n^{(E)}$:
\begin{itemize}
    \item Conceptual validity (using $\tilde \theta_n^{(E)}$ to estimate $\theta^*$ agrees with our philosophical understanding of e-statistics),
    \item Consistency (convergence $\tilde\theta_n^{(E)} \to \theta^*$ in probability or $P^*$-almost surely),
    \item Rate of convergence ($\mathcal{O}_{a.s.}$ and $o_{a.s.}$ rate of $|\tilde\theta_n^{(E)} - \theta^*|$ under $P^*$), and
    \item Asymptotic normality (convergence of $\sqrt{n}(\tilde\theta_n^{(E)} - \theta^*)$ to a normal distribution),
\end{itemize}
given that the e-statistics $\{M_n^\theta : \theta \in \Theta \}$ are ``powerful enough''. 

A recurring instantiation of the aforementioned e-statistics and ME-estimator paradigm that we consider in this paper is the following \emph{bounded betting} setup, studied by various authors including \cite{shafer2005probability,shafer2019game,shafer2021testing,waudby2020estimating,orabona2023tight,voracek2025starbets,wang2026almost} and is widely recognized as one of the most fundamental e-statistics constructions.
\begin{example}[Bounded betting] \label[example]{ex:betting}
    Let $\Theta = [0,1]$ and $\cP_\theta$ be the set of all $[0,1]$-bounded distributions with mean $\theta$. The bounded betting e-process (wealth process, test martingale) is
\begin{equation}\label{eqn:betting-debut}
    W_n^\theta = \prod_{t=1}^n (1 + \lambda_t(\theta) (X_t - \theta) ),
\end{equation}
where, for each $\theta$, the \emph{bet fraction} $\lambda_t(\theta) \in[- \frac{1}{1-\theta}, \frac{1}{\theta} ]$ is a predictable process depending continuously on $\theta$. Additionally, we may consider a convex combination (mixture) of these processes arising from different choices of bet fraction. For example,
\begin{equation}
    W_n^\theta = \frac{1}{2} \prod_{t=1}^n (1 + \lambda_t(\theta) (X_t - \theta) ) + \frac{1}{2} \prod_{t=1}^n (1 - \lambda_t(\theta) (X_t - \theta) )
\end{equation}
and
\begin{equation}
    W_n^\theta = \int_{-1}^1 \prod_{t=1}^n (1 + \lambda (X_t - \theta) ) \pi(\d  \lambda)
\end{equation}
where $\pi$ is a probability measure on $[-1, 1]$. We refer to a specific realization of $(W_n^\theta)$ (i.e.\ choice of $\lambda_t(\theta)$, $\pi$, etc.) as a ``strategy''.
With any strategy, a ME-estimator
\begin{equation}
    \tilde \theta_n^{(E)} \in \argmin_{\theta \in [0,1]} W_n^{\theta}
\end{equation}
exists due to the continuity of $\theta \mapsto W_n^\theta$ and the compactness of $[0,1]$.
\end{example}

While bounded betting is our primary focus of application, our techniques and results generalize well beyond that. For example, we shall later mention the heavy-tailed unbounded mean (\cref{ex:catoni-cs-concentration}), the PAC-Bayes multivariate mean (\cref{ex:pacbayes-sn}), and the quantile testing (\cref{ex:kt-quantile}) e-processes as other instantiations of our methods.


\subsection{Evidence- and game-theoretic justification of ME-estimators}\label{sec:why-e-est}

The definition of e-values and e-processes has led to \emph{inference} (hypothesis testing with type-I error guarantee) and \emph{uncertainty quantification} (confidence intervals and confidence sequences with coverage guarantee) for the statistical parameter. Indeed, if $M_n^\theta$ is an e-value under $\theta$, one routinely rejects the null $\theta^* = \theta$ if $M_n^\theta \ge 1/\alpha$. By Markov's or Ville's inequality, this rejection rule guarantees a type 1 error rate bounded by $\alpha$.

The duality between testing and confidence sets then leads to the following standard construction of
a $(1-\alpha)$-confidence set for the ground truth $\theta^*$: 
\begin{equation}\label{eqn:eci}
    \CI_n^{\alpha} := \{ \theta: M_n^\theta \le 1/\alpha \},
\end{equation}
that is, the collection of parameters whose e-statistics are less than $1/\alpha$. This is usually referred to as an ``e-CI''. A ME-estimator defined in \cref{def:me} is a clear variation of theme on this confidence set. In fact, as we shall reveal soon, analyzing ME-estimators can sometimes be reduced to analyzing e-CIs.

Past authors in the field of e-statistics have very occasionally mentioned the idea of ME-estimators. For example, \citet[Figure 18]{waudby2020estimating} plot the map $\theta \mapsto 1/M_n^{\theta}$ in several examples and advise casually that its maximum is ``a reasonable point estimator". 
In general, however, there has been a lack of effort in understanding both the ME-estimators \eqref{eqn:e-estimator} and the more general question ``how to do point estimation with e-statistics". Behind this lack of a point estimation theory of e-statistics lies an important distinction between traditional statistical inference (e.g.\ parametric maximum likelihood estimation) and e-statistics-driven statistical inference. In traditional statistical inference, a point estimator is often obtained first via some score/loss criterion as the solution to the corresponding optimization problem, and tests and confidence sets follow subsequently from the properties of the point estimator. With e-statistics, a valid test against the null hypothesis now takes on a primitive role as opposed to being derivative of some estimator. It is our focus in this paper to discuss the derivative point estimator in this case. 

We propose ME-estimators \eqref{eqn:e-estimator} on the grounds of two central interpretations of e-statistics. 

\begin{itemize}
    \item First, e-statistics are commonly interpreted as a theory of \emph{evidence}.
$M_n^\theta$, the e-value testing the null $\theta^* = \theta$, measures the amount of evidence \emph{against} this null carried by the data. Indeed, if the unll $\theta^* = \theta$ was true, it would be unlikely for $M_n^\theta$ to be large due to Markov's inequality: $\Pr_{\theta}(M_n^\theta \ge x) \le x^{-1}$. Therefore the point estimator \eqref{eqn:e-estimator} has the following meaning: it
     is the parameter candidate \emph{against which observed data carries the least amount of evidence}. 
\item Second, e-statistics are the fundamental quantities in game-theoretic statistics, where they are seen as \emph{betting payoffs} in a game of ``testing by betting'' \citep{shafer2021testing}. The game involves two parties, a seller and a buyer. The risk-neutral seller believes that the null $\theta^* = \theta$ holds true, and thus offers a bet priced at $1$ unit pre-observation that pays $M_n^\theta$ units post-observation. If the null $\theta^* = \theta$ is indeed true, then these bets are in the seller's favor since $\Exp_\theta M_n^\theta \le 1$. The opportunistic buyer, on the other hand, challenges the null $\theta^* = \theta$ and believes in reality $M_n^\theta$ will likely turn out large under $P^*$, thus investing in the bet. A ME-estimator \eqref{eqn:e-estimator} therefore has the following optimization interpretation from the seller's perspective: since $1 - M_n^\theta$ is seller's realized profit when offering the bet on $\theta^* = \theta$,
    minimizing $M_n^\theta$ corresponds to picking the \emph{empirically} (i.e.\ hind-sight) most profitable bet to offer.

\end{itemize}

To summarize, on a philosophical level, a ME-estimator $\tilde \theta_n^{(E)}$ makes both evidence- and game-theoretic sense. In fact, we will revisit the second perspective and consider the buyer as well as we develop our ideas.
The remaining task of this paper is to further justify the conceptually valid ME-estimators by showing the quantitative frequentist properties they satisfy.

\subsection{Comparison to classical M-estimators}\label{sec:comp-to-m-estimators}

From the definition of $\tilde\theta_n^{(E)}$ in \eqref{eqn:e-estimator}, it is tempting to think that $\tilde\theta_n^{(E)}$ is a special kind of M-estimator as it is defined as the minimizer of a random function. The theoretical statistics literature has developed numerous standard methods (see e.g.\ Chapter 5 in \cite{van2000asymptotic}) for studying the consistency and normality of M-estimators. Do these standard methods apply? We provide a two-fold answer to this question, which sets the agenda for the rest of this paper.

    \begin{itemize}
        \item Classical M-estimator theory typically concerns
        \begin{equation}
            \argmin_{\theta} \sum_{t=1}^n f(\theta, X_t)
        \end{equation}
        that converges to the unique minimizer of $\theta \mapsto \Exp f(\theta, X)$ for some \emph{fixed} map $f$. Powerful (log-)e-statistics usually don't take the form of $\sum_{t=1}^n f(\theta, X_t)$. Rather, they are usually
        \begin{equation}
             \argmin_{\theta} \sum_{t=1}^n f_t(\theta, X_t)
        \end{equation}
        for a predictable sequence of maps $f_t$. See e.g.\ the betting example \eqref{eqn:betting-debut}. This introduces a major structural difference between M- and ME-estimators.
        However, in certain cases, the map sequence $(f_t)$ ``stabilizes'' over time, and we might be able to carry out the standard M-estimator analysis. We investigate these cases later, in \cref{sec:classical}. In fact, as we shall see, the successful adoption of the standard M-estimator analysis depends on the \emph{e-power} of the underlying e-statistics, which is the fundamental power metric for e-statistics.
        
        \item As it turns out, the convergence analysis of ME-estimators can take much advantage of the fact that they are derived from e-statistics. The safety guarantee of e-statistics, due to Markov's and Ville's inequalities, provides a natural concentration set for ME-estimators and leads to almost sure, sometimes fast convergence. We discuss this approach first, in \cref{sec:growth}.
    \end{itemize}

\section{Convergence analysis via growth of e-statistics}\label{sec:growth}

In this section, we analyze the convergence of ME-estimators loosely based on the following idea. First, $\tilde\theta_n^{(E)} \in \CI_n^\alpha$ where $\CI_n^\alpha$ is the confidence set formed by thresholding e-statistics: $\{ \theta: M_n^{\theta} \le 1/\alpha \}$ (when it is not empty -- a caveat we address later). Second, $\theta^* \in \CI_n^\alpha$ with probability at least $1-\alpha$. Therefore, the size of $\CI_n^\alpha$ upper bounds the distance between $\tilde\theta_n^{(E)}$ and $\theta^*$. We show that this idea ultimately leads to the almost sure convergence of $\tilde\theta_n^{(E)} \to \theta^*$ with quantitative rates, beginning with a slight tweak of the coverage property of these ``confidence sets'' below.

\subsection{Almost sure confidence sets}\label{sec:doubling}

Traditional confidence intervals and confidence sequences track the unknown parameter with high probability at some or all fixed sample size(s). \cite{Naaman2016as} introduces a variant of the concept where the sequence of sets tracks the unknown parameter with probability 1, but from an indefinite sample size onward. We slightly reformulate some of the discussions found in the work of \cite{Naaman2016as}.

\begin{definition}
    Let $( I_n )_{n \ge 1}$ be a sequence of random sets and $x$ a non-random element. We say $( I_n )$ is a sequence of \emph{almost sure confidence sets} for $x$ if $x$ belongs to all but finitely many of them almost surely; that is,
    \begin{equation}
        \Pr\left( \text{there exists }N>1\text{ such that } x \in \bigcap_{n = N}^\infty I_n \right) = 1.
    \end{equation}
\end{definition}

A sequence of traditional confidence intervals, with properly chosen error levels, is almost sure by the Borel-Cantelli lemma. We refer to this procedure as ``almost sure-ifying'' the confidence intervals.

\begin{proposition}[Almost sure-ified confidence intervals] \label[proposition]{prop:almost-sureification}
    For each $n$, suppose $\CI_n^\alpha$ is a $(1-\alpha)$-confidence interval for $\theta$. Then, if the sequence $( \alpha_n )$ satisfies $\sum_{n=1}^\infty \alpha_n < \infty$, the set sequence $( \CI_n^{\alpha_n} )$ is almost sure for $\theta$.
\end{proposition}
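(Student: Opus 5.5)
The plan is a direct application of the first Borel--Cantelli lemma to the sequence of miscoverage events. For each $n$, let $A_n := \{\theta \notin \CI_n^{\alpha_n}\}$ be the event that the $n$-th interval fails to cover. The hypothesis that $\CI_n^{\alpha}$ is a $(1-\alpha)$-confidence interval for every level $\alpha$ will be used at the particular level $\alpha = \alpha_n$. This gives $\Pr(A_n) \le \alpha_n$ for each $n$, and hence $\sum_{n} \Pr(A_n) \le \sum_n \alpha_n < \infty$.

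By the first Borel--Cantelli lemma, $\Pr(\limsup_n A_n) = 0$. In other words, with probability one only finitely many of the events $A_n$ occur. On the complement of $\limsup_n A_n$ there is therefore a (random) index $N$ such that $A_n$ fails for all $n \ge N$, which means $\theta \in \bigcap_{n=N}^\infty \CI_n^{\alpha_n}$. If needed we may enlarge $N$ to exceed $1$, to match the phrasing ``there exists $N>1$'' in the definition of almost sure confidence sets. This shows that the event in that definition contains the complement of $\limsup_n A_n$ and so has probability one.

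No step is a genuine obstacle; the argument is two lines once the events are named. The only points needing care are bookkeeping ones. First, the confidence guarantee must hold at each fixed $n$ and at the chosen level $\alpha_n$; no uniformity over $n$ is needed, since Borel--Cantelli requires no independence. Second, the event in the definition should be measurable. This follows because it equals $\bigcup_N \bigcap_{n \ge N} A_n^c$, a countable union of countable intersections of the events $A_n^c$, which are measurable precisely because the confidence statement $\Pr(A_n)\le\alpha_n$ presupposes it.
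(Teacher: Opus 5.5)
Your proof is correct and follows the paper's argument: bound each miscoverage probability by $\alpha_n$, note the sum is finite, and apply the first Borel--Cantelli lemma to conclude that almost surely only finitely many of the sets $\CI_n^{\alpha_n}$ miss $\theta$. Your remarks on measurability and on taking $N>1$ are routine bookkeeping that the paper leaves implicit.
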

\begin{proof}
    Since $ \sum_{n=1}^\infty \Pr(\theta \notin \CI_n) < \infty$, by the Borel-Cantelli lemma,
    \begin{equation}
        \Pr( \theta \notin \CI_n \text{ for infinitely many }n   ) = 0
    \end{equation}
    concluding the proof.
\end{proof}

While \cite{Naaman2016as} introduces almost sure confidence sets mainly as an alternative to the traditional fixed-$\alpha$ confidence sets to resolve Lindley's paradox in testing, we note that these sets also lead to consistent point estimators. Specifically,
a \emph{shrinking} almost sure sequence of confidence sets leads to a consistent sequence of point estimators, by choosing \emph{any} point in the confidence set (e.g.\ the upper/lower boundary and center in one dimension). More generally, an almost sure shrinkage rate of these almost sure confidence sets also implies the same almost sure convergence rate of these estimators.

\begin{proposition}\label[proposition]{prop:as-conv}
    Let $\{ I_n  \}$ be an almost sure sequence of confidence sets for $\theta^*$. For any estimator sequence $(\tilde \theta_n)$ such that 
    \begin{equation}
        P^*( \text{$\tilde \theta_n \in I_n$ for all but finitely many $n$} ) = 1
    \end{equation}
then
    \begin{equation}
         P^*(\tilde\theta_n \to \theta^* \, | \diam(I_n) \to 0 )=1.
    \end{equation}
    More generally,
    \begin{gather}
        P^*( |\tilde\theta_n - \theta^*| =  \mathcal{O}(r_n)  \, | \diam(I_n) = \mathcal{O}(r_n) )=1, \label{eqn:asrate-1} \\ P^*( |\tilde\theta_n - \theta^*| =  o(r_n)  \, | \diam(I_n) = o(r_n) )=1. \label{eqn:asrate-2}
\end{gather}
\end{proposition}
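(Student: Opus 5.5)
The plan is to work on a single probability-one event and then argue deterministically, path by path. Let $A$ be the event that $\theta^* \in I_n$ for all but finitely many $n$. This has $P^*$-probability one because $(I_n)$ is an almost sure sequence of confidence sets. Let $B$ be the event that $\tilde\theta_n \in I_n$ for all but finitely many $n$, which has probability one by hypothesis. Then $P^*(A\cap B)=1$. The conditional statements in the proposition should be read as follows: the event $\{\diam(I_n)\to 0\}$, intersected with $A\cap B$, is contained in $\{\tilde\theta_n \to \theta^*\}$, and similarly for the two rate statements. Since $A\cap B$ has full measure, this gives $P^*(\text{conclusion}\cap\text{hypothesis}) = P^*(\text{hypothesis})$, which is exactly the claimed conditional probability one whenever the conditioning event has positive probability. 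If it has probability zero, we read the statement as the almost sure inclusion.

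The key steps, in order, are these. Fix an outcome $\omega \in A\cap B$.

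First, there is some $N_1(\omega)$ with $\theta^*\in I_n$ for all $n\ge N_1$, and some $N_2(\omega)$ with $\tilde\theta_n\in I_n$ for all $n\ge N_2$.

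Second, for $n\ge N:=\max(N_1,N_2)$ both points lie in $I_n$. By the definition of diameter,
\begin{equation}
|\tilde\theta_n-\theta^*|\le \diam(I_n), \qquad n\ge N .
\end{equation}

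Third, transfer the asymptotic behaviour. If $\diam(I_n)\to0$, then $|\tilde\theta_n-\theta^*|\to 0$. If $\diam(I_n)\le C r_n$ for all large $n$, then $|\tilde\theta_n-\theta^*|\le C r_n$ for all large $n$, which gives \eqref{eqn:asrate-1}. If $\diam(I_n)/r_n\to0$, then $|\tilde\theta_n-\theta^*|/r_n\to0$, which gives \eqref{eqn:asrate-2}. The first claim is in fact the special case $r_n\equiv 1$ of \eqref{eqn:asrate-2}.

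No step is genuinely hard. The only points needing care are measure-theoretic bookkeeping. One is that the events $\{\diam(I_n)\to0\}$ and the others are measurable; we assume the random sets are regular enough for $\diam(I_n)$ to be a random variable. The other is the interpretation of conditioning on a possibly null event, which I would handle through the inclusion formulation above. The distance here is the metric on $\Theta$, written $|\cdot|$ as in the statement.
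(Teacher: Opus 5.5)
Your proposal is correct and follows the paper's proof essentially verbatim: the paper works on the full-measure event $E$ on which both $\theta^*$ and $\tilde\theta_n$ eventually lie in $I_n$ (your $A\cap B$), bounds $|\tilde\theta_n-\theta^*|\le\diam(I_n)$ for large $n$, and reads the conditional statements as the inclusions $E\cap\{\diam(I_n)\to 0\}\subseteq\{\tilde\theta_n\to\theta^*\}$ and their rate analogues. Your remarks on measurability and on conditioning on a possibly null event are reasonable refinements that the paper leaves implicit.
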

\begin{proof} 
Let $E$ be the event $ \left\{ \text{there exists }N>1\text{ such that } \{\theta^*, \tilde \theta_n \} \subseteq  I_n  \text{ for all }n \ge N \right\}$. Then $P^*(E) = 1$.
For each $\omega \in E$, let $N(\omega)$ be such $N$.
   When $n \ge N(\omega)$,
   $|\tilde\theta_n - \theta^* | \le  \diam(I_n)$. This shows that
   \begin{align}
 E \cap  \{  \diam(I_n) \to 0 \}  &\subseteq  \{ \tilde\theta_n \to \theta^* \}, \\
 E \cap  \{  \diam(I_n) = \mathcal{O}(r_n) \}   &\subseteq  \{ |\tilde\theta_n - \theta^*| =  \mathcal{O}(r_n) \}, \\
  E \cap  \{  \diam(I_n) = o(r_n) \}   &\subseteq  \{ |\tilde\theta_n - \theta^*| =  o(r_n) \}.
   \end{align}
   concluding the proof.
\end{proof}

\subsection{Transfer theorems for ME-estimators}

Let us now apply \cref{prop:as-conv} to ME-estimators.
Recall that $(M^\theta_n)$ are e-values or form an e-process under $\Pr_{\theta}$ for $\theta \in \Theta$. For any $\alpha \in (0, 1/2]$, recall from \cref{sec:why-e-est} that we denote the threshold set by
\begin{equation}
        \CI_n^{\alpha} = \{ \theta: M^\theta_n \le 1/\alpha \},
\end{equation}
which is a $(1-\alpha)$-confidence set for $\theta^*$; and forms a $(1-\alpha)$-confidence sequence for $\theta^*$ if $(M^\theta_n)$ are e-processes, in which case we write $\CS_n^\alpha$ instead of $\CI_n^\alpha$. These confidence sets are a common application of e-statistics.

Our first observation is that, since $\CI_n^\alpha$ contains $\theta^*$ with probability at least $1-\alpha$, it is nonempty with at least the same probability; as long as it is nonempty, it contains ME-estimators due to their respective definitions. Formally, for any $\alpha \in (0,1)$ and any ME-estimator $ \tilde\theta_n^{(E)}$:
\begin{equation}\label{eqn:me-estimator-in-CI}
    P^*( \tilde\theta_n^{(E)} \in \CI_n^\alpha )= P^*(M_n^{\tilde \theta_n^{(E)}} \le 1/\alpha) \ge P^*( M_n^{\theta^*} \le 1/\alpha )  \ge 1-\alpha.
\end{equation}
The coverage \eqref{eqn:me-estimator-in-CI} is analogous to the parameter coverage $P^*( \theta^* \in \CI_n^\alpha ) \ge 1-\alpha$, and we can apply the same almost sure-ification procedure \cref{prop:almost-sureification}. This yields the following instantiation of \cref{prop:as-conv}.
\begin{theorem}\label{thm:BC-ev}
    Let $(\alpha_n)$ be a sequence of numbers in $(0,1)$ such that $\sum_{n=1}^\infty \alpha_n < \infty$. Then, the sequence of confidence sets defined by
\begin{equation}
    \CI_n^* = \CI_{n}^{\alpha_n}= \{ \theta: M^\theta_n \le 1/\alpha_n \}
\end{equation}
is almost sure for $\theta^*$, and also contains ME-estimators $\tilde \theta_n^{(E)} \in \CI_n^*$ for all but finitely many $n$ almost surely. Consequently, \eqref{eqn:asrate-1} and \eqref{eqn:asrate-2} hold for $I_n = \CI_n^*$, $\tilde\theta_n = \tilde \theta_n^{(E)}$, and any rate sequence $(r_n)$.
\end{theorem}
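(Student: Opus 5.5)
The plan is to reduce everything to the Borel--Cantelli argument of \cref{prop:almost-sureification} and then invoke \cref{prop:as-conv}. I would proceed in three steps.

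\textbf{Step 1: coverage of $\theta^*$.} Since $M_n^{\theta^*}$ is an e-value under $P^* \in \cP_{\theta^*}$, Markov's inequality gives
\begin{equation}
P^*(\theta^* \notin \CI_n^*) = P^*(M_n^{\theta^*} > 1/\alpha_n) \le \alpha_n \Exp_{\theta^*} M_n^{\theta^*} \le \alpha_n .
\end{equation}
Summability of $(\alpha_n)$ and the Borel--Cantelli lemma then show that $\theta^* \notin \CI_n^*$ occurs only finitely often almost surely. So $(\CI_n^*)$ is almost sure for $\theta^*$; this is exactly \cref{prop:almost-sureification} applied with $\CI_n^{\alpha}$ the e-CI.

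\textbf{Step 2: the ME-estimator lies in $\CI_n^*$.} This uses a pathwise inclusion rather than a separate probability bound. By the definition of $\tilde\theta_n^{(E)}$ as a minimizer,
\begin{equation}
M_n^{\tilde\theta_n^{(E)}} \le M_n^{\theta^*} \quad \text{on every sample path},
\end{equation}
which is the argument behind \eqref{eqn:me-estimator-in-CI}. Hence
\begin{equation}
\{\theta^* \in \CI_n^*\} \subseteq \{\tilde\theta_n^{(E)} \in \CI_n^*\}.
\end{equation}
On the almost sure event from Step 1, both $\theta^*$ and $\tilde\theta_n^{(E)}$ therefore belong to $\CI_n^*$ for all $n \ge N(\omega)$. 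This holds for any tie-breaking rule, because the inequality is valid for every element of the argmin.

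\textbf{Step 3: rates.} The hypotheses of \cref{prop:as-conv} are now met with $I_n = \CI_n^*$ and $\tilde\theta_n = \tilde\theta_n^{(E)}$. That proposition yields \eqref{eqn:asrate-1} and \eqref{eqn:asrate-2} for any rate sequence $(r_n)$, since its proof is purely pathwise on the event $E$.

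\textbf{Main obstacle.} The only subtle point is to avoid applying Borel--Cantelli to $\tilde\theta_n^{(E)}$ directly, because it is a data-dependent parameter and $M_n^{\tilde\theta_n^{(E)}}$ is not an e-value. Step 2 sidesteps this through the deterministic inclusion of events, so that only the coverage of the fixed $\theta^*$ needs to be controlled. A minor measurability concern, whether $\{\tilde\theta_n^{(E)} \in \CI_n^*\}$ is an event, is handled the same way: it contains an almost sure event, so it holds with probability one under completion.
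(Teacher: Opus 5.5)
Your proof is correct and takes essentially the paper's route: Borel--Cantelli for the coverage of $\theta^*$ as in \cref{prop:almost-sureification}, then the minimizer inequality $M_n^{\tilde\theta_n^{(E)}} \le M_n^{\theta^*}$ that underlies \eqref{eqn:me-estimator-in-CI}, then \cref{prop:as-conv}. The only difference is that you transfer the almost sure event by pathwise inclusion, whereas the paper applies Borel--Cantelli a second time to the bound $P^*(\tilde\theta_n^{(E)} \notin \CI_n^*) \le \alpha_n$ from \eqref{eqn:me-estimator-in-CI}; your version is slightly cleaner, but that second application is equally valid, so the ``obstacle'' you flag is not a genuine one.
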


The proof of \cref{thm:BC-ev} is a simple utilization of the Borel-Cantelli lemma on \eqref{eqn:me-estimator-in-CI}, as we did for \cref{prop:almost-sureification}, so we omit it from the paper.

Additionally, it is worth noting that we can weaken the Borel-Cantelli condition $\sum_n \alpha_n < \infty$ in \cref{thm:BC-ev} if we obtain ME-estimators from
 a family of e-processes $(M_n^\theta)$. We prove \cref{thm:BC-epr} below in \cref{sec:pf-bc-epr}. 


\begin{theorem}\label{thm:BC-epr} Suppose $(M_n^\theta)$ are e-processes.
    Let $1 = n_1 < n_2 < \dots$ be a sequence of integers, and $\alpha_1 > \alpha_2 > \dots$ be a vanishing sequence of positive numbers. Denote by $t(n)$ the integer $t$ that satisfies $n \in [ n_{t}, n_{t+1} )$. Then, the sequence of confidence sets defined by
\begin{equation}
    \CS_n^* = \CS_{n}^{\alpha_t} = \{ \theta: M^\theta_n \le 1/\alpha_{t(n)} \}.
\end{equation}
is almost sure for $\theta$. Consequently, \eqref{eqn:asrate-1} and \eqref{eqn:asrate-2}
     hold for $I_n = \CS_n^*$, and also contains ME-estimators $\tilde \theta_n^{(E)} \in \CS_n^*$ for all but finitely many $n$ almost surely. Consequently, \eqref{eqn:asrate-1} and \eqref{eqn:asrate-2} hold for $I_n = \CS_n^*$, $\tilde\theta_n = \tilde \theta_n^{(E)}$, and any rate sequence $(r_n)$.
\end{theorem}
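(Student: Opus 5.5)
The plan is to avoid Borel--Cantelli altogether and use the time-uniform control that Ville's inequality gives for e-processes. Monotonicity of $(\alpha_t)$ then reduces the ``infinitely often'' failure event to a single crossing event whose probability is at most $\alpha_t$ for every $t$. Letting $t\to\infty$ shows it has probability $0$. This explains why only $\alpha_t \downarrow 0$ is needed, not $\sum_t \alpha_t<\infty$.

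\textbf{Step 1 (Ville's inequality for e-processes).} We show that for any $\alpha\in(0,1)$, $P^*(\exists n\ge 1: M_n^{\theta^*} > 1/\alpha)\le \alpha$.
For fixed $N$, let $\tau_N = \min\{n\le N: M_n^{\theta^*}>1/\alpha\}$, with $\tau_N = N$ if no such $n$ exists. This is a bounded stopping time, so the e-process property gives $\Exp_{\theta^*} M^{\theta^*}_{\tau_N}\le 1$. Markov's inequality, applied on the event that a crossing occurs by time $N$, gives $P^*(\exists n\le N: M_n^{\theta^*}>1/\alpha)\le\alpha$. Letting $N\to\infty$ and using continuity from below yields the claim.

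\textbf{Step 2 (reduction via monotonicity).} Let $F$ be the event that $\theta^*\notin\CS_n^*$ for infinitely many $n$.
Fix any $t$. On $F$ there is some $n\ge n_t$ with $M_n^{\theta^*}>1/\alpha_{t(n)}$. Since $n\ge n_t$ implies $t(n)\ge t$, and $(\alpha_t)$ is decreasing, we have $1/\alpha_{t(n)}\ge 1/\alpha_t$, so $M_n^{\theta^*}>1/\alpha_t$. Hence $F\subseteq\{\exists n: M_n^{\theta^*}>1/\alpha_t\}$, and Step 1 gives $P^*(F)\le\alpha_t$.
Since $\alpha_t\to 0$, we get $P^*(F)=0$, so $(\CS_n^*)$ is almost sure for $\theta^*$.
(Equivalently, one could bound each epoch $[n_t,n_{t+1})$ separately, but the uniform bound above is cleaner and is exactly where the e-process assumption enters.)

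\textbf{Step 3 (ME-estimators and conclusion).} Whenever $\theta^*\in\CS_n^*$, the definition of $\tilde\theta_n^{(E)}$ as a minimizer gives $M_n^{\tilde\theta_n^{(E)}}\le M_n^{\theta^*}\le 1/\alpha_{t(n)}$, so $\tilde\theta_n^{(E)}\in\CS_n^*$. Therefore, on $F^c$, which has probability one, both $\theta^*$ and $\tilde\theta_n^{(E)}$ lie in $\CS_n^*$ for all but finitely many $n$. \cref{prop:as-conv} with $I_n=\CS_n^*$ and $\tilde\theta_n=\tilde\theta_n^{(E)}$ then yields \eqref{eqn:asrate-1} and \eqref{eqn:asrate-2} for any rate sequence $(r_n)$.

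The only genuinely delicate point is Step 1: one must handle a possibly unbounded crossing time via truncation. The rest is set inclusion.
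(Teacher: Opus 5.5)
Your proposal is correct and is essentially the paper's argument. The paper shows that the Ville event $A_{\alpha_k}=\{\sup_n M_n^{\theta^*}\le 1/\alpha_k\}$ is contained in the event $C_k^*$ that $\theta^*$ is covered from time $n_k$ onward, and then takes the increasing union over $k$; your inclusion $F\subseteq\{\exists n: M_n^{\theta^*}>1/\alpha_t\}$ is exactly the complement of that containment. Beyond the paper's proof, you derive Ville's inequality from the stopping-time definition of an e-process (the paper takes it as given), and you verify explicitly that $\tilde\theta_n^{(E)}\in\CS_n^*$ whenever $\theta^*\in\CS_n^*$, which the paper's proof leaves implicit.
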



The theorems above show that the strong consistency and convergence of ME-estimators are deeply connected to the following problem in the e-statistics literature: \emph{given the e-statistic family $\{ M_n^{\theta} : \theta \in \Theta \}$, how can one quantify the size of the corresponding confidence set $\diam(\CI_n^\alpha)$?} In particular, we would like the almost sure rate of shrinkage when $\alpha = \alpha_n$ is plugged in at $\diam(\CI_n^{\alpha})$: $P^*( \diam(\CI_n^{*} ) = \mathcal O(r_n) ) = 1$ or $P^*( \diam(\CI_n^{*} ) = o(r_n) ) = 1$. We study this in the following subsection.

\subsection{Bounding $\diam(\CI_n^*)$ via uniform growth inequalities}

A confidence set $\CI_n^\alpha$ obtained by thresholding e-statistics usually has random diameters. To invoke \cref{thm:BC-epr}, we need $\mathcal{O}_{a.s.}$ or $o_{a.s.}$ shrinkage rates on $\diam(\CI_n^*)$. This, we note, may be obtained once again with the technique of almost sure-ification in \cref{prop:almost-sureification}.
Let us orient the reader with an existing result in the literature where a nonasymptotic analysis exists on $\diam(\CI_n^\alpha)$, then explore the generality.

Our example comes from \cite{Wang2023Catoni}, where an exponentially decaying confidence sequence is constructed for heavy-tailed (finite variance) data via a pair of supermartingales.

\begin{example}[Catoni-style CS] \label[example]{ex:catoni-cs-concentration}

\citet[Corrolary 10.1]{Wang2023Catoni} show that the diameter $\diam(\CS_n^{\alpha})$ of the ``Catoni-style confidence sequence'' enjoys the following (informal) concentration bound
\begin{equation}\label{eqn:catoni-cs-bound-informal}
     P^*\left(\diam(\CS_n^{\alpha}) \le  {\mathcal{O}}(n^{-1/2}
    \log n(\log(1/\varepsilon) + \log(1/\alpha)) ) \right) \ge 1 - \varepsilon.
\end{equation}
As we can almost sure-ify a sequence of confidence sets or a confidence sequence by a vanishing sequence of \emph{coverage} error levels $\alpha$, we can also almost sure-ify this high-probability bound on the size $\diam(\CS_n^\alpha)$ by a vanishing sequence of \emph{concentration} error levels $\varepsilon$. To this end, let us take
$\alpha_n = \varepsilon_n = 1/2n^2$ in \eqref{eqn:catoni-cs-bound-informal}. Then, $(\CS_n^{\alpha_n})$ is an almost sure sequence of confidence sets, and by the Borel-Cantelli lemma,
\begin{equation}
  \diam(\CS_n^{\alpha_n}) = \mathcal{O}_{a.s.}(\log n /\sqrt{n}),
\end{equation}
which is also an almost sure convergence rate of ME-estimators in this case.
\end{example}

The example above involves a random-size confidence set, whose high-probability bound is proved bespokely by \cite{Wang2023Catoni} via a supermartingale argument which does not easily generalize to other e-statistics. Therefore, we establish a widely applicable ``master theorem''
that bridges the \emph{uniform} growth of e-statistics, the high-probability bound of the confidence set, and consequently the almost sure convergence of ME-estimators.

\begin{theorem}[Master theorem of uniform growth]\label{thm:master} Let $(M_n^\theta)$ be the e-values or e-process testing the null $\theta$. Fixing a data-generating distribution $P^* \in \cP_{\theta^*}$,
    suppose there exists a data-dependent function (which may depend on \emph{both} $P^*$ \emph{and} the data $X_1,\dots, X_n$) $D_n: \mathbb R^{\ge 0}\times [0,1) \to \mathbb R^{\ge 0}$ that are strictly increasing in its first argument, such that the following uniform concentration holds
    \begin{equation}\label{eqn:gn-conc}
  P^* \left\{ \forall \theta \in \Theta, \ \log M_n^{\theta} \ge D_n( |\theta - \theta^* |, \delta)  \right\} \ge 1-\delta, \quad \forall \delta \in [0,1)
\end{equation}
for all $n \ge N_0$.
Then, denoting the inverse of $D_n( \cdot, \delta)$ by $\varphi_n(\cdot, \delta)$, the size of the inverted confidence set $\CI_n^\alpha = \{ \theta : \log M_n^{\theta} \le \log(1/\alpha)  \}$ is concentrated as
\begin{equation}\label{eqn:hp-bound-CI}
    P^* \left\{  \diam(\CI_n^\alpha) \le 2 \varphi_n (\log(1/\alpha) ,\delta) \right\} \ge 1-\delta,
\end{equation}
and the error of any ME-estimator is concentrated as
\begin{equation}\label{eqn:hp-bound-err}
    P^* \left\{  | \tilde\theta_n^{(E)} - \theta^* | \le  \varphi_n (\log(1/\alpha) ,\delta) \right\} \ge 1-\delta -\alpha.
\end{equation}
From either, we may deduce that ME-estimators have a rate of convergence of
\begin{equation}\label{eqn:as-rate-me}
  \tilde\theta_n^{(E)} = \theta^* + \mathcal{O}_{a.s.}(\varphi_n((1+\varepsilon)\log n ,n^{-1-\varepsilon}) ),
\end{equation}
where $\varepsilon > 0$ is arbitrary.
\end{theorem}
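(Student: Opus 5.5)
The plan is to work on the high-probability event supplied by \eqref{eqn:gn-conc} and convert the lower bound on $\log M_n^\theta$ into a localization statement, using that $\varphi_n(\cdot,\delta)$ is increasing because $D_n(\cdot,\delta)$ is. Fix $n \ge N_0$ and $\delta$, and let $A_n(\delta)$ be the event $\{\forall \theta,\ \log M_n^\theta \ge D_n(|\theta-\theta^*|,\delta)\}$, so that $P^*(A_n(\delta)) \ge 1-\delta$.

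\textbf{Bound on the confidence set.} On $A_n(\delta)$, take any $\theta \in \CI_n^\alpha$. Then $D_n(|\theta-\theta^*|,\delta) \le \log M_n^\theta \le \log(1/\alpha)$. Applying the increasing inverse gives $|\theta-\theta^*| \le \varphi_n(\log(1/\alpha),\delta)$. So $\CI_n^\alpha$ lies in a ball of that radius around $\theta^*$, and the triangle inequality yields \eqref{eqn:hp-bound-CI}. One technicality needs care: if $\log(1/\alpha)$ lies outside the range of $D_n(\cdot,\delta)$, I read $\varphi_n$ as a generalized inverse, $\varphi_n(y,\delta)=\sup\{r: D_n(r,\delta)\le y\}$, and the same inequality still holds.

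\textbf{Bound on the estimator.} By \eqref{eqn:me-estimator-in-CI}, the event $\{\tilde\theta_n^{(E)} \in \CI_n^\alpha\}$ has probability at least $1-\alpha$. Intersecting it with $A_n(\delta)$ and using a union bound, the argument above applied to $\theta=\tilde\theta_n^{(E)}$ gives $|\tilde\theta_n^{(E)}-\theta^*| \le \varphi_n(\log(1/\alpha),\delta)$ with probability at least $1-\delta-\alpha$. This is \eqref{eqn:hp-bound-err}. A sharper alternative avoids passing through $\CI_n^\alpha$: since $M_n^{\tilde\theta}\le M_n^{\theta^*}$, on $A_n(\delta)\cap\{M_n^{\theta^*}\le 1/\alpha\}$ we directly get $D_n(|\tilde\theta-\theta^*|,\delta)\le \log(1/\alpha)$. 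Either route works.

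\textbf{Almost sure rate.} Set $\alpha_n = n^{-1-\varepsilon}$, so that $\log(1/\alpha_n)=(1+\varepsilon)\log n$, and $\delta_n = n^{-1-\varepsilon}$. Both sequences are summable. By \eqref{eqn:hp-bound-err}, the failure probabilities $P^*\{|\tilde\theta_n^{(E)}-\theta^*| > \varphi_n((1+\varepsilon)\log n, n^{-1-\varepsilon})\}$ are at most $2n^{-1-\varepsilon}$ for $n\ge N_0$, which is summable. Borel--Cantelli then shows the bound holds for all but finitely many $n$ almost surely. That gives \eqref{eqn:as-rate-me} with constant $1$ inside the $\mathcal{O}_{a.s.}$.

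Alternatively, one can invoke \cref{thm:BC-ev} with the sets $\CI_n^{\alpha_n}$ and Borel--Cantelli on \eqref{eqn:hp-bound-CI}, which yields $\diam(\CI_n^{*}) = \mathcal{O}_{a.s.}(\varphi_n(\cdot))$, and then apply \cref{prop:as-conv}.

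\textbf{Main obstacle.} The main subtlety is not probabilistic; it is the inverse step. I need to confirm that $\varphi_n$ is monotone and well defined via the generalized inverse, and that $D_n$ being data-dependent causes no measurability issue. The latter holds because we only use that $A_n(\delta)$ is an event of the stated probability.
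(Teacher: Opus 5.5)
Your proposal is correct and follows the paper's (omitted) argument. On the uniform-growth event every point of $\CI_n^\alpha$ lies within $\varphi_n(\log(1/\alpha),\delta)$ of $\theta^*$, so the triangle inequality gives \eqref{eqn:hp-bound-CI}, a union bound with \eqref{eqn:me-estimator-in-CI} gives \eqref{eqn:hp-bound-err}, and Borel--Cantelli with $\alpha_n=\delta_n=n^{-1-\varepsilon}$ gives \eqref{eqn:as-rate-me}. Your generalized-inverse reading of $\varphi_n$ matches the relaxation the paper states right after the theorem. Your ``sharper alternative'' rests on the same inclusion $\{M_n^{\theta^*}\le 1/\alpha\}\subseteq\{M_n^{\tilde\theta_n^{(E)}}\le 1/\alpha\}$ that yields \eqref{eqn:me-estimator-in-CI}, so it reproduces the same bound rather than improving it.
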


The monotonicity assumption on $D_n(\cdot, \delta)$ and the existence of an inverse $\varphi_n(\cdot, \delta)$ can be relaxed to
\begin{equation}
  \forall h > 0,  D_n(\varphi_n(  h, \delta ) ,\delta) = h, \text{and }  D_n(v, \delta)  > h \text{ for all } v > \varphi_n(  h, \delta ),
\end{equation}
which holds for e.g.\ convex $D_n(\cdot, \delta)$.
The straightforward proof of this theorem involves applying the triangle inequality for \eqref{eqn:hp-bound-CI} and applying \eqref{eqn:me-estimator-in-CI} for \eqref{eqn:hp-bound-err}, which 
we omit from the paper. We refer to an instance of \eqref{eqn:gn-conc} as a ``uniform growth inequality'' (UGI).
The concepts involved in the theorem can be presented by the following diagram, where arrows marked with ``B-C'' are applications of the Borel-Cantelli lemma.

$$\begin{tikzcd}[column sep=large, row sep=large]
    \parbox{0.4\textwidth}{\centering Uniform growth inequality of e-statistics \eqref{eqn:gn-conc}} \arrow[r] \arrow[d] & \parbox{0.4\textwidth}{\centering High-probability bound on the size of the CI \eqref{eqn:hp-bound-CI}} \arrow[d,"\text{B-C}"] \arrow[dl,dash,dotted,"\text{off by 2}"] \\
    \parbox{0.4\textwidth}{\centering High-probability bound on the error of any ME-estimator \eqref{eqn:hp-bound-err}} \arrow[r,"\text{B-C}"] &   \parbox{0.4\textwidth}{\centering Almost-sure convergence rate of ME-estimators \eqref{eqn:as-rate-me}}
\end{tikzcd}$$
   
 The functions $(D_n)$ in the UGI \eqref{eqn:gn-conc} characterize the growth trend of the log-e-statistics (by time and effect size), which can be deterministic or sample-dependent. The inverses, $(\varphi_n)$, can consequently also be sample-dependent, but their almost sure shrinkage rate is often easy to determine, so that via \eqref{eqn:as-rate-me}:
 \begin{equation}
     \varphi_n((1+\varepsilon)\log n ,n^{-1-\varepsilon}) = \mathcal{O}_{a.s.}(r_n) \implies  \tilde\theta_n^{(E)} = \theta^* + \mathcal{O}_{a.s.}(r_n).
 \end{equation}
 The key idea to obtain a UGI is the \emph{decoupling} of $|\theta - \theta^*|$ from the log-e-statistics, by means of deterministic or concentration inequalities. 

One may also expand the UGI \eqref{eqn:gn-conc} as
\begin{equation}
    \log M_n^{\theta} \ge D_n( |\theta - \theta^*| , 0 ) - ( D_n( |\theta - \theta^*| , 0 )  - D_n( |\theta - \theta^*| , \delta ) ),
\end{equation}
consisting of a mean term and a downside fluctuation term.
This likens the UGI to regret bounds proved in the context of online learning. Indeed, many log-e-processes do satisfy regret bounds (see e.g.\ \cite{orabona2023tight}). In fact, we shall discuss regret bounds in the bounded betting context later. However, two crucial distinctions exist between UGI and regret bounds. First, UGI satisfies with high probability, whereas regret bounds usually hold on all sample paths (a notable exception is the recent work by \cite{agrawal2025eventually} on high-probability regret bounds). Second, more importantly, UGI and regret bounds compare the realized log-e-statistic $\log M_n^\theta$ to different benchmarks: for UGI, it is the ``mean'' driving forces $D_n(|\theta - \theta^*|, 0)$ that the actual log-e-statistics fluctuate around \emph{in unison across the $\theta$ dimension}; whereas for regret bounds, it is hind-sight optimal and there is no notion of uniformity over $\theta$. The meaning of UGI will become much clearer when we illustrate it with examples.

\subsection{Some examples of uniform growth inequalities}\label{sec:betting-growths}

We investigate a few e-process-based tests where \cref{thm:master} is satisfied. We begin with the bounded betting, \cref{ex:betting}. The ground truth is $(X_n) \iid P^*$ on $[0,1]$ with mean $\theta^*$ and variance $\sigma^2$. We instantiate \cref{thm:master} on the following simple strategy, where we show an $n^{-1/2}\log n$ almost sure rate of ME-estimators.

\begin{example}[Deterministic hedging] \label[example]{ex:det-hedge}
Consider a ``hedging'' strategy that places decreasing bets on both ``heads'' and ``tails'' symmetrically and deterministically. The wealth process reads
\begin{equation}
W_n^\theta =  \frac{1}{2} \prod_{t=1}^n (1 + \lambda_t(X_t - \theta) ) + \frac{1}{2} \prod_{t=1}^n (1 - \lambda_t(X_t - \theta) ).
\end{equation}
We set $\lambda_t = 1/\sqrt{3t}$ for demonstration but other nonrandom $\asymp t^{-1/2}$ choice would all maintain the argument.
With this setup, we can show that \cref{thm:master} holds with the uniform growth inequality:
\begin{equation}\label{eqn:hedging-ugi}
  P^* \left( \forall \theta \in [0,1],\  \log W_n^\theta \ge 2\sqrt{n/3}|\theta - \theta^*| - \log 2 - \frac{\log n+1}{3}   - \sqrt{\frac{\log(2en/\delta)}{6}} \right) \ge 1 - \delta.
\end{equation}
Consequently, the inverted confidence set $\CS_n^\alpha$ satisfies
\begin{equation}
    P^* \left( \diam(\CS_n^{\alpha}) \le  \frac{\log(1/\alpha) + \log 2 + \frac{\log n + 1}{3} + \sqrt{\frac{\log(2en/\delta)}{6}}  }{\sqrt{n/3}}  \right) \ge 1-\delta,
\end{equation}
and therefore
\begin{equation}
   \tilde \theta_n^{(E)} = \theta^* + \mathcal{O}_{a.s.}(n^{-1/2}\log n ).
\end{equation}
\end{example}

We prove the UGI \eqref{eqn:hedging-ugi} of this betting strategy in \cref{sec:pf-det-hedge}. It is worth remarking that in this UGI, the ``growth rate'' $\log W_n^{\theta} \asymp \sqrt{n}$ is sublinear with respect to the sample size $n$, which is arguably the actual growth rate due to the $\lambda_n \asymp n^{-1/2}$ vanishing bet rate of the strategy. There exist many other betting strategies that can achieve a linear growth rate $\log W_n^{\theta} \asymp n$, one of which we present immediately next. However, despite not being power-optimal, the hedging strategy nonetheless produces ME-estimators of near-optimal rate. This example shows that a family of e-statistics with suboptimal (in $n$) UGI can lead to a near-optimal convergence rate of ME-estimators, using our method presented in this section.

Now, we present another betting strategy whose UGI growth rate is the faster-growing $\log W_n^{\theta} \asymp n$, while having the same $\mathcal{O}_{a.s.}(n^{-1/2}\log n )$ rate.

\begin{example}[KT] \label[example]{ex:kt-ugi} Consider the Krichevsky-Trofimov-type (KT) strategy where the fraction $\lambda_t$ is an appropriately shifted sample mean of past observations \citep{orabona2016coin}:
\[ \lambda_t^{KT}(\theta) = \frac{1/2 + \sum_{i=1}^{t-1} (X_i - \theta)}{t c }. \]
The wealth process is $W_n^\theta = \prod_{t=1}^n (1 + \lambda_t^{KT}(\theta) (X_t - \theta))$. We set $c=2$ below, but the argument below extends to general $c \ge c_{\min}\approx 1.46$ such that $\log (1+x) \ge x-x^2$ holds for all $|x|\le 1/c$.
We can show that \cref{thm:master} holds with the uniform growth inequality:
\begin{equation}\label{eqn:kt-ugi}
P^* \left( \forall \theta \in [0,1], \ \log W_n^\theta \ge \frac{n}{4} |\theta - \theta^*|^2 - \sqrt{n}K |\theta - \theta^*| - 2\sqrt{C_n}K - C_n \right) \ge 1 - \delta,
\end{equation}
where $K = \sqrt{\frac{1}{8} \log \frac{6}{\delta}}$ 
and $C_n = \frac{\pi^2}{12} + \frac{1}{8} \left(\log \frac{6n}{\delta}\right) (\log n + 1)$.
Consequently, the inverted confidence set $\CS_n^\alpha$ satisfies
\begin{equation}
    P^* \left( \diam(\CS_n^{\alpha}) \le \frac{4 K}{\sqrt{n}} + 2\sqrt{\frac{2\sqrt{C_n}K + C_n  + \log(1/\alpha)}{n/4}} \right) \ge 1-\delta,
\end{equation}
and therefore its ME-estimators achieve the near-optimal almost sure convergence rate:
\begin{equation}
   \tilde \theta_n^{(E)} = \theta^* + \mathcal{O}_{a.s.}(n^{-1/2}\log n ).
\end{equation}
\end{example}
The UGI \eqref{eqn:kt-ugi} is proved in \cref{sec:pf-kt}. The KT strategy, which later plays a central role in \cref{sec:classical}, has a fast, linear uniform growth rate \eqref{eqn:kt-ugi} that leads to the same $\mathcal{O}_{a.s.}(n^{-1/2}\log n )$ convergence rate of ME-estimators compared to the slower growing hedging strategy, \cref{ex:det-hedge}.
To intuitively understand this phenomenon, we remark that in general faster (in $n$) uniform growth of alternative e-processes does \emph{not} necessarily lead to smaller CSs and better ME-estimators. In particular, the fast shrinkage rate of CS and convergence rate of ME-estimators call for the growth rate to match $|\theta - \theta^*|$ \emph{in the vicinity of $\theta^*$}. A similar observation was made recently by \cite{agrawal2025eventually}.

We finally present a multivariate example.

\begin{example}[PAC-Bayes self-normalized e-process for bounded vectors] \label[example]{ex:pacbayes-sn} Let $\Sigma$ be a matrix upper bound of the common covariance matrix of $(X_n) \iid P^*$ on $ \mathbb R^d$. A log-e-process for the null $\Exp_\theta X_1 = \theta$ is (see Proof of Theorem 3.1 in \cite{chugg2023time})
\begin{equation}
  \log  M_n^\theta = \left\| \sum_{t=1}^n \lambda_t (X_t - \theta) \right\| - \sum_{t=1}^n \frac{\lambda_t^2}{6}\left(\|   X_t - \theta \|^2(1+\beta^{-1}) + 2\|\Sigma\|_{\mathrm{op}} + \beta^{-1} \tr\Sigma  \right) - \frac{\beta}{2},
\end{equation}
where $\beta > 0$ is an arbitrary tuning parameter, and let us (for demonstration) set $\lambda_t = 1/\sqrt{t}$.
With this setup, we can show that \cref{thm:master} holds with the uniform growth inequality:
\begin{multline}\label{eqn:pacbayes-ugi}
  P^* \Biggl( \forall \theta \in \mathbb R^d,\   \| \theta^* - \theta \|\sum_{t=1}^n \lambda_t  - \| \theta^* - \theta \|^2 \frac{1+\beta^{-1}}{3} \sum_{t=1}^n \lambda_t^2  \\
    -  \sum_{t=1}^n \frac{\lambda_t^2}{6}\left( 3\|   X_t - \theta^* \|^2(1+\beta^{-1}) + 4\|\Sigma\|_{\mathrm{op}} + 2\beta^{-1} \tr\Sigma  \right) - \beta - 2\log(1/\delta)  \Biggr) \ge 1 - \delta.
\end{multline}
Consequently, if we further assume that $\| X_1 \|\le L$, for large $n$ satisfying $\frac{\sum_{t=1}^n \lambda_t }{\sum_{t=1}^n \lambda_t^2} > \frac{2L(1+\beta^{-1})}{3}$, the inverted confidence set $\CS_n^\alpha$ satisfies
\begin{equation}\label{eqn:pacbayes-cs-bound}\scriptsize
    P^* \left( \diam(\CS_n^{\alpha}) \le  \frac{ 2\sum_{t=1}^n \frac{\lambda_t^2}{6}\left( 3\|   X_t - \theta^* \|^2(1+\beta^{-1}) + 4\|\Sigma\|_{\mathrm{op}} + 2\beta^{-1} \tr\Sigma  \right) + 2\beta + 4\log(1/\delta) + 2\log(1/\alpha)   }{
        \sum_{t=1}^n \lambda_t - \frac{2L(1+\beta^{-1})}{3} \sum_{t=1}^n \lambda_t^2
    }\right)   \ge 1-\delta,
\end{equation}
and therefore
\begin{equation}\label{eqn:pacbayes-as-rate}
   \tilde \theta_n^{(E)} = \theta^* + \mathcal{O}_{a.s.}(n^{-1/2}\log n ).
\end{equation}
\end{example}

We prove the UGI \eqref{eqn:pacbayes-ugi} and the almost sure rate \eqref{eqn:pacbayes-as-rate} in \cref{sec:pf-pacbayes-sn}. It is worth noting that the high-probability concentration bound, i.e.\ the fraction within \eqref{eqn:pacbayes-cs-bound}, is random, corresponding to our remark after \cref{thm:master} that $\phi_n$ can be sample-dependent.

\section{Classical M-estimator analysis of betting ME-estimators}\label{sec:classical}

We now discuss the consistency, asymptotic normality, and efficiency of ME-estimators in the bounded betting case (\cref{ex:betting}), using methods akin to those employed in the classical M-estimator literature (e.g.\ \cite{van2000asymptotic}). We keep adopting the setup in \cref{sec:betting-growths}, with $X_1, X_2, \dots \in [0,1]$ the sequence of observations with mean $\theta^*$ and variance $\sigma^2 > 0$.  The betting wealth reads, for $\theta \in [0,1]$,
\[ W_n^\theta = \prod_{t=1}^n \big(1 + \lambda_t(\theta) (X_t - \theta)\big), \]
where $\lambda_t(\theta)$ is a predictable betting fraction that depends on $\theta$, with ME-estimator
\[ \tilde{\theta}_n^{(E)} \in \argmin_{\theta} W_n^\theta  = \argmin_{\theta} \underbrace{\sum_{t=1}^n \log \big(1 + \lambda_t(\theta) (X_t - \theta)\big)}_{=: w_n(\theta)}. \]
Our analysis focuses on two predictable plug-in betting strategies. First, 
 the Krichevsky-Trofimov-type (KT) strategy which appeared earlier as \cref{ex:kt-ugi}, where the fraction $\lambda_t$ is an appropriately shifted sample mean of past observations
\[ \lambda_t^{KT}(\theta) = \frac{1/2 + \sum_{i=1}^{t-1} (X_i - \theta)}{t c }, \]
where $c \ge 1/0.6$ is some constant. Second, the aGRAPA (approximately growth rate adaptive to the particular
alternative) strategy introduced by \cite{waudby2020estimating}
$$\lambda_t^{aG}(\theta) = \frac{\bar{X}_{t-1} - \theta}{\hat{\sigma}_{t-1}^2 + (\bar{X}_{t-1} - \theta)^2}  \wedge 0.6 \vee -0.6, $$
where $\bar{X}_{t-1}$ is the empirical mean and $\hat{\sigma}_{t-1}^2$ is the empirical variance of past observations $X_1,\dots, X_{t-1}$.
The techniques presented here may extend to other betting strategies and potentially other e-statistics as well. Here, the clipping constant 0.6 is employed purely for mathematical convenience: the inequality $\log(1+x) \ge x- 0.9 x^2$ holds for all $x \ge -0.6$. Different clipping constants are also viable. We also remark that aGRAPA is the ``approximate'' version of a strategy called GRAPA, as discussed by \cite{waudby2020estimating}. We analyze aGRAPA instead of GRAPA in this paper because (1) GRAPA is much more computationally prohibitive than aGRAPA, but empirically they perform equally well; (2) the GRAPA bet fraction is defined by another M-estimation procedure, which makes it significantly harder to analyze the needed uniform convergence of bet fractions across $\theta$ dimension compared to the closed-form aGRAPA, the proof for which we later spell out as \cref{cor:consistency-kt-agrapa} in a few lines.

\subsection{Consistency}

Our analysis on the (weak) consistency of ME-estimators $\tilde \theta_n^{(E)}$ relies on the following key property of the betting strategy: the uniform convergence of $\frac{1}{n}w_n(\theta) = \frac{1}{n}\log W_n(\theta)$ to an e-power function $M(\theta)$ with a well-separated global minimum at $M(\theta^*) = 0$.

\begin{theorem}[Consistency of betting ME-estimators] \label{thm:consistency} Let $\Theta = [0,1]$ and $\lambda_t(\theta)$ be a predictable strategy such that $\frac{1}{n} w_n(\theta)$ converges uniformly in probability to a continuous limit function $M(\theta) = \Exp_{\theta^*}[\log(1 + \lambda^*(\theta)(X - \theta))]$. 
If the asymptotic bet fraction $\lambda^*(\theta)$ satisfies $\lambda^*(\theta^*) = 0$ and creates uniformly positive expected growth
\begin{equation}
    \inf_{ |\theta -\theta^*| \ge \varepsilon } \Exp_{\theta^*}[\log(1 + \lambda^*(\theta)(X - \theta))] > 0
\end{equation}
for all $\varepsilon > 0$, then any sequence of ME-estimators $\tilde{\theta}_n^{(E)} \in \argmin_{\theta \in \Theta} w_n(\theta)$ is consistent:
\[ \tilde{\theta}_n^{(E)} \xrightarrow{p} \theta^* .\]
\end{theorem}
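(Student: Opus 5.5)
This is the standard argmin-consistency argument for M-estimators (cf.\ Theorem 5.7 in \cite{van2000asymptotic}), applied to the criterion $\frac1n w_n$ and its limit $M$. Here we are minimizing, and the well-separated minimum is $M(\theta^*)=0$. Write $\Delta_n := \sup_{\theta\in[0,1]} |\frac1n w_n(\theta) - M(\theta)|$. By hypothesis $\Delta_n \xrightarrow{p} 0$. First, $\lambda^*(\theta^*)=0$ gives $M(\theta^*) = \Exp_{\theta^*}\log 1 = 0$. Second, define
\[ \eta(\varepsilon) := \inf_{|\theta-\theta^*|\ge\varepsilon} M(\theta), \]
which is strictly positive for each $\varepsilon>0$ by the growth assumption. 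So $\theta^*$ is a well-separated minimizer of $M$.

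The key deterministic step compares the criterion at the estimator and at $\theta^*$. Since $\tilde\theta_n^{(E)}$ minimizes $w_n$, we have $\frac1n w_n(\tilde\theta_n^{(E)}) \le \frac1n w_n(\theta^*)$. Applying the uniform deviation bound at both points gives
\[ M(\tilde\theta_n^{(E)}) \le \tfrac1n w_n(\tilde\theta_n^{(E)}) + \Delta_n \le \tfrac1n w_n(\theta^*) + \Delta_n \le M(\theta^*) + 2\Delta_n = 2\Delta_n. \]
This holds pointwise on every sample path, regardless of which minimizer was selected. The set of minimizers is nonempty by continuity of $\theta\mapsto w_n(\theta)$ on the compact set $[0,1]$, as noted in \cref{ex:betting}.

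From here the conclusion is immediate. On the event $\{|\tilde\theta_n^{(E)}-\theta^*|\ge\varepsilon\}$ we have $M(\tilde\theta_n^{(E)})\ge\eta(\varepsilon)$, and combined with the display this gives $2\Delta_n \ge \eta(\varepsilon)$. Therefore
\[ P^*(|\tilde\theta_n^{(E)}-\theta^*|\ge\varepsilon) \le P^*(\Delta_n \ge \eta(\varepsilon)/2) \to 0, \]
which is exactly $\tilde\theta_n^{(E)}\xrightarrow{p}\theta^*$.

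There is no serious obstacle here; everything substantive is contained in the hypotheses. The only point needing care is measurability: $\tilde\theta_n^{(E)}$ is an arbitrary measurable selection from the argmin set, and $\Delta_n$ is a supremum over an uncountable set. I would deal with both by noting that continuity of $w_n$ in $\theta$ lets us take the supremum over rationals, and, if needed, by reading the probabilities as outer probabilities, as in \cite{van2000asymptotic}. Continuity of $M$ is not actually used in this argument beyond making the statement natural. The real work of verifying uniform convergence and positive growth is left to strategy-specific results such as the later \cref{cor:consistency-kt-agrapa}.
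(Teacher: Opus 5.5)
Your proposal is correct and takes the same approach as the paper. The paper notes that $\sup_\theta|\frac1n w_n(\theta)-M(\theta)|\xrightarrow{p}0$ and that $M(\theta^*)=0$ is a well-separated minimum, then cites Theorem 5.7 of \cite{van2000asymptotic}; you instead write out that theorem's proof, the chain $M(\tilde\theta_n^{(E)})\le 2\Delta_n$, which makes the argument self-contained.
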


\begin{proof} Note that $\sup_{\theta \in \Theta}|\frac{1}{n}w_n(\theta) - M(\theta)| \xrightarrow{p} 0$ and $ \inf_{ |\theta -\theta^*| \ge \varepsilon } M(\theta) > M(\theta^*) = 0$.
We invoke the classical M-estimation result by
\citet[Theorem 5.7]{van2000asymptotic} on ME-estimators $\tilde \theta_n^{(E)} \in \argmin \frac{1}{n}w_n(\theta)$ to see that $\tilde{\theta}_n^{(E)} \xrightarrow{p} \theta^* $.
\end{proof}

The theorem holds for ME-estimators from KT and aGRAPA.

\begin{corollary}\label[corollary]{cor:consistency-kt-agrapa}
     KT and aGRAPA satisfy \cref{thm:consistency}'s assumptions, and their ME-estimators are consequently consistent.
\end{corollary}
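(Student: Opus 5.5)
We verify the hypotheses of \cref{thm:consistency} for each strategy in three steps: identify the limiting bet fraction $\lambda^*(\theta)$, show that $\frac1n w_n(\theta)$ converges uniformly in probability to $M(\theta)=\Exp_{\theta^*}[\log(1+\lambda^*(\theta)(X-\theta))]$, and check well-separation. The limits are obtained by plugging the strong law $\bar X_{t-1}\to\theta^*$, $\hat\sigma^2_{t-1}\to\sigma^2$ into the closed forms. For KT this gives
\[
\lambda^{KT*}(\theta)=\frac{\theta^*-\theta}{c}.
\]
For aGRAPA it gives
\[
\lambda^{aG*}(\theta)=\frac{\theta^*-\theta}{\sigma^2+(\theta^*-\theta)^2}\wedge 0.6\vee -0.6.
\]
Both vanish at $\theta^*$ and are continuous in $\theta$, since $\sigma^2>0$. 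Hence $M$ is continuous in each case, by dominated convergence: $\log(1+\lambda^*(X-\theta))$ is bounded because $|\lambda^*(X-\theta)|\le 0.6$.

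\textbf{Uniform convergence.} The key fact is that both strategies depend on the past only through $(\bar X_{t-1},\hat\sigma^2_{t-1})$, which do not depend on $\theta$. For KT, $\lambda^{KT}_t(\theta)=\frac{1/2+(t-1)(\bar X_{t-1}-\theta)}{tc}$, so
\[
\sup_\theta|\lambda_t^{KT}(\theta)-\lambda^{KT*}(\theta)|\le \frac{|\bar X_{t-1}-\theta^*|+O(1/t)}{c}\to 0 \quad\text{a.s.}
\]
For aGRAPA, the map $(m,v,\theta)\mapsto \frac{m-\theta}{v+(m-\theta)^2}$, clipped, is uniformly continuous on $[0,1]\times[\sigma^2/2,1]\times[0,1]$. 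Since $\hat\sigma^2_{t-1}\ge\sigma^2/2$ eventually, a.s., we again get $\sup_\theta|\lambda_t(\theta)-\lambda^*(\theta)|\to0$ a.s. Moreover $|\lambda_t(X_t-\theta)|\le 0.6$; for KT this uses $c\ge 1/0.6$ and $|\lambda^{KT}_t|\le 1/c$. Since $\log(1+x)$ is Lipschitz on $[-0.6,0.6]$ with constant $2.5$,
\[
\sup_\theta\Big|\tfrac1n w_n(\theta)-\tfrac1n\sum_{t=1}^n \log(1+\lambda^*(\theta)(X_t-\theta))\Big|\le \tfrac{2.5}{n}\sum_{t=1}^n\sup_\theta|\lambda_t(\theta)-\lambda^*(\theta)|\to0
\]
almost surely, by Cesàro averaging. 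It remains to show that the i.i.d. average $\frac1n\sum_t g(\theta,X_t)$, with $g(\theta,x)=\log(1+\lambda^*(\theta)(x-\theta))$, converges to $M(\theta)$ uniformly. This is a uniform law of large numbers over the compact set $[0,1]$ with $g$ continuous in $\theta$ and bounded. It follows from the Glivenko–Cantelli property of such classes (van der Vaart, Example 19.8), or directly: $g$ is jointly continuous on $[0,1]^2$ and hence equicontinuous in $\theta$, so a finite $\theta$-grid plus pointwise SLLN suffices.

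\textbf{Well-separation.} For $\theta\ne\theta^*$, concavity of $\log(1+x)$ on the relevant range suggests using $\log(1+x)\ge x-0.9x^2$ for $x\ge-0.6$. Writing $\lambda^*=\lambda^*(\theta)$ and $\Delta=\theta^*-\theta$, this gives
\[
M(\theta)\ge \lambda^*\Delta-0.9(\lambda^*)^2(\sigma^2+\Delta^2).
\]
For KT, $\lambda^*=\Delta/c$, and the bound becomes $\frac{\Delta^2}{c}\big(1-\frac{0.9(\sigma^2+\Delta^2)}{c}\big)$. This is positive because $\sigma^2+\Delta^2=\Exp(X-\theta)^2\le1$ and $c\ge 1/0.6$, so $0.9/c\le0.54$. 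For aGRAPA without clipping, $\lambda^*\Delta=\frac{\Delta^2}{\sigma^2+\Delta^2}$ and the bound equals $0.1\frac{\Delta^2}{\sigma^2+\Delta^2}>0$. When clipping is active, $|\lambda^*|=0.6\le \frac{|\Delta|}{\sigma^2+\Delta^2}$, and the bound $0.6|\Delta|-0.324(\sigma^2+\Delta^2)\ge 0.6|\Delta|(1-0.54)>0$. In all cases the lower bound is a continuous function of $\theta$ that is positive off $\theta^*$. Its infimum over the compact set $\{|\theta-\theta^*|\ge\varepsilon\}\cap[0,1]$ is therefore positive, and \cref{thm:consistency} applies.

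\textbf{Main obstacle.} I expect the main obstacle to be the uniform bet-fraction convergence for aGRAPA, specifically the possible degeneracy of $\hat\sigma^2_{t-1}$ near $0$ in early rounds. This is handled because only the eventual behavior matters for Cesàro averages, together with $\sigma^2>0$. Keeping the clipping bound $|\lambda_t(X_t-\theta)|\le0.6$ valid for KT also needs care; it uses $c\ge1/0.6$ and $|\frac{1/2+\sum_{i<t}(X_i-\theta)}{t}|\le 1$.
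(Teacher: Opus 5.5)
Your proposal is correct and follows the paper's three-step route. The steps are the same: uniform a.s.\ convergence of the bet fractions via continuity in $(\bar X_{t-1},\hat\sigma^2_{t-1},\theta)$, then Lipschitzness of $\log(1+x)$ on $[-0.6,0.6]$ plus a uniform LLN, then positivity via $\log(1+x)\ge x-0.9x^2$. Your ``continuous positive lower bound plus compactness'' step makes the well-separation slightly more explicit than the paper's pointwise-strict bound. One small arithmetic slip: in the clipped aGRAPA case, $0.6(\sigma^2+\Delta^2)\le|\Delta|$ gives $0.324(\sigma^2+\Delta^2)\le 0.54|\Delta|$, so the lower bound is $0.06|\Delta|$ rather than $0.6|\Delta|(1-0.54)$; this is still positive and does not affect the conclusion.
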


We leave the detailed verification to \cref{sec:pf-consistency-kt-agrapa}, which envolves three steps: (1) uniform convergence of the bet fractions $\lambda_t(\theta)$ to $\lambda^*(\theta)$, 
 (2) uniform convergence of the sequential log-wealth $\frac{1}{n} w_n(\theta)$ to $M(\theta)$, and (3) the strict positivity condition $M(\theta) > \delta > 0$ for all $\theta$ such that $|\theta - \theta^*| \ge \varepsilon$.

It is also worth noting that the consistency result here does not pose any smoothness condition on $w_n(\theta)$. This enables a variation of the bounded mean betting \cref{ex:betting}, namely \emph{quantile} betting \citep[Section 7]{waudby2020estimating}. Take the median as an example, instead of the wealth process
\begin{equation}
    \prod_{t=1}^n(1 + \lambda_t(\theta) \cdot (X_t - \theta)),
\end{equation}
we now consider
\begin{equation}
    \prod_{t=1}^n(1 + \lambda_t(\theta) \cdot (\id_{ \{ X_t\le \theta \} } - 0.5)).
\end{equation}
Despite their similar forms, the quantile betting wealth is not continuous with respect to $\theta$, creating a major difference for minimization. Let us write down the corresponding KT betting strategy and the consistency result, omitting its straightforward proof.

\begin{example}[KT betting for median] \label[example]{ex:kt-quantile} Let $(X_n) \iid P^*$ where $P^*$ is a distribution on $\mathbb R$ with CDF $F^*$.  Then, the KT quantile betting log-wealth for $c \ge 1$
    \begin{gather}
    w_n(\theta) = \sum_{t=1}^n \log(1 + \lambda_t(\theta) \cdot (\id_{ \{ X_t\le \theta \} } - 0.5)), \\ \lambda_t(\theta) = \frac{1/2 + \sum_{i=1}^{t-1} (\id_{ \{ X_i\le \theta \} } - 0.5)}{tc}, 
\end{gather}
satisfies $\sup_{\theta \in \mathbb R}| \lambda_t(\theta) -  \lambda^*(\theta)  | \xrightarrow{p} 0$ and $\sup_{\theta \in \mathbb R}|\frac{1}{n}w_n(\theta) - M(\theta)| \xrightarrow{p} 0$, where
\begin{gather}
   \lambda^*(\theta) = c^{-1} (F^*(\theta) - 0.5) ,\\
    M(\theta) = F^*(\theta) \log(1 + (2c)^{-1} (F^*(\theta) - 0.5) ) + (1-F^*(\theta)) \log (1 - (2c)^{-1} (F^*(\theta) - 0.5) ).
\end{gather}
Additionally, if $F^*$ is continuous at the median $F^*(\theta^*) = 0.5$, then $M(\theta^*) = 0$ is a well-separated minimum of $M(\theta)$. Consequently, any sequence of ME-estimators
\begin{equation}
    \tilde\theta_n^{(E)} \in \argmin_{X_{(k)}} w_n(X_{(k)}) \subseteq \argmin_{\theta \in \mathbb R} w_n(\theta)
\end{equation}
converges in probability to the true median $\theta^*$. Here, for $k = 0,1,\dots, n$, we denote by $X_{(k)}$ the $k$\textsuperscript{th} ascending order statistic, with the convention $X_{(0)} := X_{(1)} - 1$.
\end{example}

\subsection{Asymptotic normality and relative efficiency}

Our analysis on the asymptotic normality of ME-estimators $\tilde{\theta}_n^{(E)}$ follows a path similar to the classical analysis of M-estimators. We apply the first-order condition on the local minima of an adequately smooth $w_n(\theta) = \log W_n(\theta)$ function to see that $S_n(\tilde{\theta}_n^{(E)})  = 0$, where
\begin{equation}\label{eqn:score}
    S_n(\theta) := w_n'(\theta) = \sum_{t=1}^n\frac{\lambda_t'(\theta)(X_t - \theta) - \lambda_t(\theta)}{1 + \lambda_t(\theta)(X_t - \theta)}
\end{equation}
is the ``e-score function'', which is an analog of the score function in parametric models. Then, via Taylor expansion,
\begin{equation}
    0 = S_n(\tilde{\theta}_n^{(E)}) =  S_n(\theta^*) + (\tilde{\theta}_n^{(E)} - \theta^* ) S_n'(\theta^*) + \frac{1}{2} (\tilde{\theta}_n^{(E)} - \theta^* )^2 S_n''(\bar \theta_n)
\end{equation}
for some $\bar \theta_n$ on the line segment between $\theta^*$ and $\tilde{\theta}_n^{(E)}$. Therefore, analogous to the standard heuristics for M-estimator's CLT (e.g.\ Chapter 5.3 in \cite{van2000asymptotic}), we have the representation
\[ \sqrt{n}\big(\tilde{\theta}_n^{(E)} - \theta^*\big) = - \left( \frac{1}{n}S_n'(\theta^*) + \frac{1}{2n} S_n''(\bar{\theta}_n)(\tilde{\theta}_n^{(E)} - \theta^*) \right)^{-1} \left( \frac{1}{\sqrt{n}} S_n(\theta^*) \right). \]
Intuitively:
\begin{enumerate}
    \item \textbf{$\frac{1}{\sqrt{n}} S_n(\theta^*) $ satisfies a CLT, because $\lambda_t(\theta^*)$ and $\lambda_t'(\theta^*)$ both ``stabilize'', leaving the sum \eqref{eqn:score} asymptotically a sum of i.i.d.\footnote{We shall see the nuance in \cref{sec:pf-clt} that it actually cannot be approximated by the sum of i.i.d.\ random variables but is instead approximated by the row sums of a \emph{triangular} array of indepedent and \emph{almost} identically distributed random variables (distributions of the RVs in a row have the same shape, but ``stretched'' logarithmically)}\ random variables;} 
    \item \textbf{$\frac{1}{n}S_n'(\theta^*)$ satisfies a LLN, with similar reasoning as above;}
    \item \textbf{$\frac{1}{2n} S_n''(\bar{\theta}_n)(\tilde{\theta}_n^{(E)} - \theta^*) $ vanishes by controling $|S_n''/n|$ and the consistency of $\tilde{\theta}_n^{(E)}$.}
\end{enumerate}
Therefore, $\sqrt{n}\big(\tilde{\theta}_n^{(E)} - \theta^*\big) $ should also satisfy a CLT.

Here, the only things that matter should be \emph{the local behavior of $\lambda_t(\theta)$, $\lambda_t'(\theta)$, and $\lambda_t''(\theta)$ near $\theta \approx \theta^*$}.
Let us carry out the computation
for KT

\[ \lambda_t^{KT}(\theta) = \frac{1/2 + \sum_{i=1}^{t-1} (X_i - \theta)}{t c }, \]
where $c \ge 1/0.6$ is some constant; and aGRAPA

$$\lambda_t^{aG}(\theta) = \frac{\bar{X}_{t-1} - \theta}{\hat{\sigma}_{t-1}^2 + (\bar{X}_{t-1} - \theta)^2} \wedge 0.6 \vee -0.6 .$$

It turns out, from the asymptotic normality standpoint, one may see aGRAPA at $\theta^*$ as equivalent to KT with $c=\sigma^2$. ME-estimators from ``approximately KT'' betting strategies (which include KT and aGRAPA) satisfy the following central limit theorem.

\begin{theorem}[Asymptotic normality of betting ME-estimators]\label{thm:clt-me}
Let $X_1, X_2, \dots \in [0,1]$ be i.i.d.\ with mean $\theta^*$ and variance $\sigma^2 > 0$. Let
\[
  W_n(\theta) = \prod_{t=1}^n \bigl(1 + \lambda_t(\theta)(X_t - \theta)\bigr), \qquad
  w_n(\theta) = \sum_{t=1}^n \log\bigl(1 + \lambda_t(\theta)(X_t - \theta)\bigr),
\]
and let ME-estimators be defined as $\tilde\theta_n^{(E)} \in \argmin_{\theta\in\Theta} w_n(\theta)$, where $\Theta = [0,1]$ contains $\theta^*$ in its interior $(0,1)$. Assume $\tilde\theta_n^{(E)} \xrightarrow{p} \theta^*$ (as established by \cref{thm:consistency}).
Furthermore, suppose the predictable betting strategy $\lambda_t(\theta)$ is twice continuously differentiable with respect to $\theta$ almost surely, and satisfies the following local conditions at or around $\theta^*$:
\begin{enumerate}
  \item[\textnormal{(A1)}] \textbf{Stabilization of derivatives at $\theta^*$.}
  There exists a constant $\beta > 0$ such that
  \[
    \lambda_t'(\theta^*) \xrightarrow{L^2} -\beta \qquad \text{and} \qquad \frac{1}{n}\sum_{t=1}^n |\lambda_t''(\theta^*)| = O_p(1).
  \]

  \item[\textnormal{(A2)}] \textbf{Approximately KT fractions at $\theta^*$.}
  The betting fractions evaluated at the true parameter match the estimation error of the empirical mean in the first order:
  \[
      \lambda_t(\theta^*) = \beta(\bar X_{t-1} - \theta^*) + r_t,
  \]
  where $\bar X_{t-1} = \frac{1}{t-1}\sum_{i=1}^{t-1} X_i$ (with $\bar X_0 := \theta^*$), and the residual $r_t$ satisfies $n^{-1/2} \sum_{t=1}^n |r_t| \xrightarrow{p} 0$.

  \item[\textnormal{(A3)}] \textbf{Local smoothness.}
  There exists a neighborhood $\Theta_0$ of $\theta^*$ and a constant $L < \infty$ such that
  \[
    \sup_{\theta\in\Theta_0} \bigl( |\lambda_t(\theta)| + |\lambda_t'(\theta)| + |\lambda_t''(\theta)| \bigr) \le L \qquad \text{a.s.\ for all } t.
  \]
  Additionally, we assume the bets are bounded away from $-1$ in this neighborhood, i.e., $1 + \lambda_t(\theta)(X_t - \theta) \ge \epsilon > 0$.
\end{enumerate}
Then, ME-estimators are asymptotically normal:
\[
   \sqrt{n}\bigl(\tilde\theta_n^{(E)} - \theta^*\bigr)
   \xrightarrow{d} \mathcal N\bigl(0, v_\beta\bigr),
\]
where the asymptotic variance is given by
\begin{equation}\label{eqn:vbeta}
   v_\beta = \sigma^2 \frac{5 - 6\beta\sigma^2 + 2\beta^2\sigma^4}{(2 - \beta\sigma^2)^2}, \qquad \text{for } 0 < \beta < \frac{2}{\sigma^2}.
\end{equation}
\end{theorem}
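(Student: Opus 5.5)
The plan follows the three-step heuristic displayed just before the statement. By consistency and $\theta^*\in(0,1)$, with probability tending to one $\tilde\theta_n^{(E)}$ lies in the interior of the neighborhood $\Theta_0$ of (A3), so it is an interior local minimizer and $S_n(\tilde\theta_n^{(E)})=0$. Taylor expansion then gives the displayed representation of $\sqrt n(\tilde\theta_n^{(E)}-\theta^*)$. It remains to establish three limits.

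\textbf{Step 1 (remainder).} By (A3) and the lower bound $1+\lambda_t(\theta)(X_t-\theta)\ge\epsilon$, each summand of $S_n''(\theta)$ is bounded by a constant $C(L,\epsilon)$ uniformly on $\Theta_0$. Hence $|S_n''(\bar\theta_n)|/n\le C$, and the term $\frac1{2n}S_n''(\bar\theta_n)(\tilde\theta_n^{(E)}-\theta^*)$ is $o_p(1)$ by consistency.

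\textbf{Step 2 (LLN for $S_n'$).} Write $\xi_t=X_t-\theta^*$ and abbreviate $\lambda_t=\lambda_t(\theta^*)$, $\lambda_t'=\lambda_t'(\theta^*)$, $\lambda_t''=\lambda_t''(\theta^*)$. Differentiating the summand gives
\[
\frac{\lambda_t''\xi_t-2\lambda_t'}{1+\lambda_t\xi_t}-\frac{(\lambda_t'\xi_t-\lambda_t)^2}{(1+\lambda_t\xi_t)^2}.
\]
By (A2) and the strong law, $\lambda_t\to0$; by (A1), $\lambda_t'\to-\beta$ in $L^2$. The term $\frac1n\sum\lambda_t''\xi_t$ is a martingale average with increments bounded by $L$ (using (A3)), so it vanishes. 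Averaging the rest, I expect
\[
\tfrac1nS_n'(\theta^*)\xrightarrow{p}2\beta-\beta^2\sigma^2=\beta(2-\beta\sigma^2),
\]
which is positive exactly when $\beta<2/\sigma^2$. The convergence $\frac1n\sum\lambda_t'^2\xi_t^2\to\beta^2\sigma^2$ follows from the $L^2$ convergence of $\lambda_t'$ combined with a martingale LLN for $\xi_t^2-\sigma^2$.

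\textbf{Step 3 (CLT for $S_n/\sqrt n$).} This is the main obstacle. Expand $1/(1+\lambda_t\xi_t)=1-\lambda_t\xi_t+O(\lambda_t^2)$ to get
\[
S_n(\theta^*)=\sum_t\bigl[\lambda_t'\xi_t-\lambda_t-\lambda_t'\lambda_t\xi_t^2+\lambda_t^2\xi_t\bigr]+R_n.
\]
Here $R_n$ is bounded by $C\sum_t\lambda_t^2$. Since $\lambda_t^2=O_p(1/t)$ with an $O(\log n)$ sum, and the $r_t$-part is controlled by (A2), $R_n=O_p(\log n)=o_p(\sqrt n)$. I then handle each term of the expansion.
\begin{itemize}
\item Replace $\lambda_t'$ by $-\beta$. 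The error $\sum(\lambda_t'+\beta)\xi_t$ is a martingale whose variance is $\sigma^2\sum E(\lambda_t'+\beta)^2=o(n)$ by (A1).
\item Replace $\lambda_t$ by $\beta(\bar X_{t-1}-\theta^*)$, using (A2) for the residual $r_t$.
\item Replace $\xi_t^2$ by $\sigma^2$ in the cross term. The error is a martingale with $O(\sum\lambda_t^2)$ variance, and the same bound kills $\sum\lambda_t^2\xi_t$.
\end{itemize}
The cross term $-\lambda_t'\lambda_t\xi_t^2\approx\beta^2\sigma^2(\bar X_{t-1}-\theta^*)$ is of order $\sqrt n$ after summation and must be kept. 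This leaves
\[
S_n(\theta^*)=-\beta\sum_{t}\bigl[\xi_t+(1-\beta\sigma^2)(\bar X_{t-1}-\theta^*)\bigr]+o_p(\sqrt n).
\]
Exchanging the order of summation gives
\[
\sum_{t\le n}(\bar X_{t-1}-\theta^*)=\sum_{i<n}\xi_i H_{n,i},\qquad H_{n,i}=\sum_{t=i+1}^n\frac1{t-1}=\log(n/i)+O(1/i).
\]
So $S_n/\sqrt n$ is asymptotically the row sum of the independent triangular array $-\beta n^{-1/2}\xi_i\bigl(1+a\log(n/i)\bigr)$ with $a=1-\beta\sigma^2$. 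The Lindeberg condition holds because $\xi_i$ is bounded and $\max_i\log(n/i)/\sqrt n\le\log n/\sqrt n\to0$. The variance is
\[
\beta^2\sigma^2\int_0^1(1-a\log u)^2\,du=\beta^2\sigma^2(1+2a+2a^2)=\beta^2\sigma^2(5-6\beta\sigma^2+2\beta^2\sigma^4).
\]

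\textbf{Conclusion.} Combining the three steps by Slutsky, $\sqrt n(\tilde\theta_n^{(E)}-\theta^*)$ converges to a centered normal with variance
\[
\frac{\beta^2\sigma^2(5-6\beta\sigma^2+2\beta^2\sigma^4)}{\beta^2(2-\beta\sigma^2)^2}=v_\beta.
\]
The delicate bookkeeping is in Step 3. One must check that every discarded term is $o_p(\sqrt n)$, in particular the products of the non-i.i.d. quantity $\lambda_t'$ with $\bar X_{t-1}-\theta^*$. One must also see that the cross term $-\lambda_t'\lambda_t\xi_t^2$ is not negligible: it is precisely the source of the $-6\beta\sigma^2+2\beta^2\sigma^4$ correction to the naive constant $5$.
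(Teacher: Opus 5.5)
Your proposal is essentially the paper's own proof. It uses the same Taylor representation and the same expansion of the score via $1/(1+u)=1-u+u^2/(1+u)$, with the cross term $-\lambda_t'\lambda_t\xi_t^2$ supplying the $\beta^2\sigma^2(\bar X_{t-1}-\theta^*)$ correction. It also uses the same exchange of summation into a $\log(n/i)$-weighted triangular array handled by Lindeberg--Feller, the same limit $\beta(2-\beta\sigma^2)$ for $S_n'(\theta^*)/n$, and Slutsky.

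Two remarks, both also relevant to the paper's version:

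\begin{enumerate}
\item The step you flag but leave open, $\sum_t(\lambda_t'+\beta)\lambda_t=o_p(\sqrt n)$, is not a martingale sum, so the Term-1 argument the paper cites does not literally apply. It does hold, because
\[
\Exp\bigl|\textstyle\sum_{t=2}^n(\lambda_t'+\beta)(\bar X_{t-1}-\theta^*)\bigr|\le\sigma\sum_{t=2}^n\|\lambda_t'+\beta\|_{L^2}(t-1)^{-1/2}=o(\sqrt n),
\]
and the residual part is controlled by $\bigl|\sum_t(\lambda_t'+\beta)r_t\bigr|\le(L+\beta)\sum_t|r_t|=o_p(\sqrt n)$.
\item Exactly as in the paper, bounding $S_n''=w_n'''$ uniformly requires control of $\lambda_t'''$. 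Neither (A3) nor the $C^2$ hypothesis provides this, so as written your Step 1 needs one more derivative. This is harmless for KT, where $\lambda_t$ is affine in $\theta$.
\end{enumerate}
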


\begin{corollary}
    Any sequence of ME-estimators from KT with parameter $c \ge 1/0.6$ satisfies \cref{thm:clt-me}'s assumptions with $\beta = 1/c$. The asymptotic variance $v_{1/c} \ge \sigma^2$ and $v_{1/c} \to 1.25 \sigma^2$ when $c \to \infty$. Any sequence of ME-estimators from aGRAPA satisfies \cref{thm:clt-me}'s assumptions with $\beta = 1/\sigma^2$. The asymptotic variance $v_{1/\sigma^2} = \sigma^2$. That is, aGRAPA's ME-estimator is efficient.
\end{corollary}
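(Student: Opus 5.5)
We verify the hypotheses of \cref{thm:clt-me} (including the consistency assumption) separately for each strategy. Then we evaluate $v_\beta$ in \eqref{eqn:vbeta} at the resulting $\beta$. Consistency, $\tilde\theta_n^{(E)} \xrightarrow{p} \theta^*$, is supplied for both strategies by \cref{cor:consistency-kt-agrapa}. Throughout, write $d_{t-1}(\theta) = \bar X_{t-1} - \theta$.

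\emph{KT.} Rewrite the fraction as $\lambda_t^{KT}(\theta) = \bigl(1/2 + (t-1) d_{t-1}(\theta)\bigr)/(tc)$, which is affine in $\theta$. Hence $\lambda_t'(\theta) = -(t-1)/(tc)$ is deterministic and converges to $-1/c$, and $\lambda_t'' \equiv 0$. This gives (A1) with $\beta = 1/c$. For (A2), a direct computation gives
\[
  r_t = \lambda_t(\theta^*) - c^{-1} d_{t-1}(\theta^*) = \frac{1/2 - d_{t-1}(\theta^*)}{tc},
\]
so $|r_t| \le 1/(tc)$. It follows that $n^{-1/2}\sum_{t\le n}|r_t| \le (1+\log n)/(c\sqrt n) \to 0$, deterministically. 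For (A3), $|\lambda_t| \le (1/2 + (t-1))/(tc) \le 1/c \le 0.6$. Combined with $|X_t-\theta|\le 1$, this gives $1 + \lambda_t(X_t-\theta) \ge 0.4$; the derivatives are bounded by $1/c$ and $0$. Finally, $\sigma^2 \le 1/4$ for $[0,1]$-valued data, so $\beta\sigma^2 \le 0.15 < 2$ and $\beta$ lies in the admissible range.

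\emph{The variance formula.} Put $x = \beta\sigma^2 \in (0,2)$. A one-line algebraic identity gives
\[
  \frac{v_\beta}{\sigma^2} - 1 = \frac{5-6x+2x^2 - (2-x)^2}{(2-x)^2} = \frac{(1-x)^2}{(2-x)^2} \ge 0,
\]
with equality iff $x = 1$. For KT, $x = \sigma^2/c \to 0$ as $c\to\infty$, so $v_{1/c}/\sigma^2 \to 5/4$. For aGRAPA, $\beta = 1/\sigma^2$ gives $x = 1$ and hence $v_\beta = \sigma^2$, which is the efficiency claim.

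\emph{aGRAPA.} This case carries the real work, and I expect (A3) and differentiability to be the main obstacle. Two features cause trouble: the clipping at $\pm 0.6$ creates kinks, and for small $t$ the estimate $\hat\sigma_{t-1}^2$ may vanish, so $\lambda''$ need not be uniformly bounded. The plan is a localization argument. By the strong law, there is an a.s.\ finite random time $T$ with the following property: for $t \ge T$ we have $\hat\sigma^2_{t-1} \ge \sigma^2/2$ and $|\bar X_{t-1}-\theta^*| \le \eta$. Choose $\Theta_0 = (\theta^*-\eta, \theta^*+\eta)$ with $\eta$ small enough that $|d|/(\hat\sigma^2+d^2) < 0.6$ there. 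Then for $t\ge T$ the clipping is inactive on $\Theta_0$, and $\lambda_t$ is a smooth rational function of $\theta$ with uniformly bounded first and second derivatives. The terms with $t<T$ contribute $O_p(1)$ to $S_n$, $S_n'$ and $S_n''$. This is negligible after the $\sqrt n$ and $n$ normalizations in the representation preceding \cref{thm:clt-me}, so (A3) need only hold from $T$ on. Alternatively, the early fractions can be replaced by a smooth convention without affecting the limit.

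On this event, with $d = d_{t-1}(\theta^*)$, the derivative is
\[
  \lambda_t'(\theta^*) = \frac{d^2 - \hat\sigma_{t-1}^2}{(\hat\sigma^2_{t-1} + d^2)^2} \to -\frac{1}{\sigma^2} \quad \text{a.s.}
\]
This convergence is also bounded, so it upgrades to $L^2$ convergence and gives (A1) with $\beta = 1/\sigma^2$; the average of $|\lambda_t''(\theta^*)|$ is then $O_p(1)$ for the same reason. For (A2), write
\[
  r_t = d\Bigl(\frac{1}{\hat\sigma^2_{t-1}+d^2} - \frac{1}{\sigma^2}\Bigr),
\]
up to a finite number of clipped terms. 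Then $|r_t| \lesssim |d|\,\bigl(|\hat\sigma^2_{t-1}-\sigma^2| + d^2\bigr)$. By Cauchy--Schwarz, $\Exp|r_t| = O(1/t)$ on the good event, so $\Exp \sum_{t\le n}|r_t| = O(\log n)$. Markov's inequality then gives $n^{-1/2}\sum_{t\le n}|r_t| \xrightarrow{p} 0$.
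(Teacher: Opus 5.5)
The paper gives no separate proof of this corollary. It only asserts it, remarking that aGRAPA at $\theta^*$ behaves like KT with $c=\sigma^2$. Your plan of checking (A1)--(A3) and substituting $\beta$ into $v_\beta$ is the intended one.

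Your KT verification is correct, and so is the identity $v_\beta/\sigma^2-1=(1-x)^2/(2-x)^2$. The only slip is that $|1/2-d_{t-1}(\theta^*)|$ can reach $\theta^*+1/2$, so $|r_t|\le 3/(2tc)$; this is harmless.

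You are also right, and more careful than the paper, that aGRAPA does not literally satisfy the twice-differentiability requirement or (A3). Two steps of your localization, however, do not go through as written.

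\emph{The step ``bounded, hence $L^2$'' in (A1).} Your bound on $\lambda_t'(\theta^*)$ holds only for $t\ge T$, with $T$ random. So for each fixed $t$ it fails with positive probability. When $\hat\sigma^2_{t-1}$ is small and $\bar X_{t-1}$ is near $\theta^*$, the unclipped derivative is about $-1/\hat\sigma^2_{t-1}$. Hence $\lambda_t'(\theta^*)$ is not uniformly bounded, and uniform integrability does not come for free. If $P^*$ has an atom at $\theta^*$, it is not even defined with positive probability: all past observations can equal $\theta^*$, and then $\lambda_t$ jumps at $\theta^*$.

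The repair is to run the argument with $\tilde\lambda_t:=\lambda_t\mathbf{1}_{G_{t-1}}$, where $G_{t-1}=\{\hat\sigma^2_{t-1}\ge\sigma^2/2,\ |\bar X_{t-1}-\theta^*|\le\eta\}$. Since $G_{t-1}\in\mathcal F_{t-1}$ (unlike $\{t\ge T\}$, because $T$ is not a stopping time), $\tilde\lambda_t$ is predictable. It satisfies (A1)--(A3) with a deterministic $L$, so bounded convergence now legitimately gives $L^2$. It also coincides with $\lambda_t$ for all $t\ge T$.

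\emph{The claim that early terms contribute $O_p(1)$ to $S_n'$ and $S_n''$.} The summands with $t<T$ need not be differentiable on $\Theta_0$, for two reasons:
\begin{itemize}
\item clipping kinks can land in $\Theta_0$ when $\bar X_{t-1}$ is far from $\theta^*$;
\item whenever $\hat\sigma^2_{t-1}=0$ (always at $t=2$ with the plain empirical variance), $\lambda_t(\theta)=0.6\operatorname{sgn}(\bar X_{t-1}-\theta)$ jumps at $\bar X_{t-1}$.
\end{itemize}
So the $O_p(1)$ claim has no meaning, and the first-order condition $S_n(\tilde\theta_n^{(E)})=0$ used in the theorem can fail.

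Kinks can be handled. Write $w_n=\tilde w_n+A_n$, where $A_n$ collects the a.s.\ finitely many terms with $G_{t-1}^c$. Suppose $A_n$ is Lipschitz near $\theta^*$ with an a.s.\ finite constant $C$; this holds whenever those $\hat\sigma^2_{t-1}>0$. Then the one-sided first-order conditions at the interior minimizer give $|\tilde w_n'(\tilde\theta_n^{(E)})|\le C$. The theorem's Taylor argument applied to $\tilde w_n'$ yields the same limit, since $C/\sqrt n\to 0$.

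Jumps cannot be absorbed this way. They are harmless only if their locations are a.s.\ different from $\theta^*$; then you can restrict to a random neighbourhood of $\theta^*$ that avoids them. If instead $P^*$ has an atom at $\theta^*$, then with positive probability $X_1=\theta^*$, and the $t=2$ term jumps by an $O(1)$ amount exactly at $\theta^*$. On an event of non-vanishing probability this pins the minimizer at $\theta^*$, or prevents the argmin from existing, and the normal limit fails.

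So a smooth convention for the early bets (for example a regularized variance estimate) is a necessary part of defining aGRAPA for this corollary. It is not a change that leaves the limit unaffected, as your proposal claims.
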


The proof of \cref{thm:clt-me}, whose high-level roadmap we already set out at the beginning of this subsection, is spelled out fully in \cref{sec:pf-clt}. 
The $\approx$1.25 asymptotic relative efficiency of KT's ME-estimators with large $c$, the asymptotic efficiency of aGRAPA's ME-estimators, as well as a simple normality check, are demonstrated by the following experiment. We compute a ME-estimator (a local minimum computed by Brent's method) and the vanilla sample mean on a sample of $n=100,000$ i.i.d.\ observations uniformly distributed on $[0,1]$,
with 2,500 repetitions for evaluating the sampling distribution.

\begin{center}
\begin{tabular}{lccc}
\toprule
Estimator & Theory Ratio ($v_\beta/\sigma^2$) & Simulation Ratio& JB Test ($p$-value) \\
\midrule
Sample Mean & 1.0000 & 1.0000 & 0.1429 \\
KT ($c=10$) & 1.2479 & 1.2655 & 0.1227 \\
aGRAPA & 1.0000 & 1.0012 & 0.1457 \\
\bottomrule
\end{tabular}
\end{center}
 The observed variance ratios show an excellent match with our theory. Furthermore, the non-rejecting Jarque-Bera test $p$-values corroborate the asymptotic normality of these ME-estimators.

Finally, we remark that we do not attempt at the asymptotic normality of the quantile betting scheme \cref{ex:kt-quantile}, as the required thrice differentiable log-wealth $w_n$ in bounded mean betting deteriorates drastically to a discontinuous function. The situation is very different from that of the classical M-estimator formulation of the sample quantile, where the minimized objective function is still Lipschitz, allowing methods like that of \citet[Theorem 5.23]{van2000asymptotic}.

\section{Discussions}

\subsection{Mixture betting and saddle point}

We have left out an important class of bounded betting strategies from our exposition above: mixture betting. Let $\pi$ be a probability measure on $[-1,1]$. The mixture betting wealth is
\begin{equation}\label{eqn:mix}
    W_n^\theta = \int \prod_{t=1}^n (1 + \lambda (X_t - \theta)) \pi(\d \lambda).
\end{equation}
Mixture betting dates as far back as the ``universal portfolio'' studied by \cite{cover1991universal}, and there have been multiple results on the pathwise regret of mixture betting \citep{cover2002universal,orabona2023tight}: some specific choices of $\pi$ lead to
\begin{equation}\label{eqn:regret}
    \log W_n^\theta \ge  \sup_{\lambda \in [-\frac{1}{1-\theta}, \frac{1}{\theta}] }\left\{ \sum_{t=1}^n \log(1 + \lambda (X_t - \theta)) \right\} - \operatorname{Reg}_n,
\end{equation}
which holds for \emph{any} sequence $X_1, X_2,\dots \in [0,1]$,
where the regret $\operatorname{Reg}_n$ is a non-random sequence (e.g.\ $\operatorname{Reg}_n = \log \sqrt{\pi (n+1)}$, see \cite{orabona2023tight}) independent of $\theta$.

These regret bounds \eqref{eqn:regret} conveniently resolve the issue of computability of the mixture integral \eqref{eqn:mix}: instead of approximating the integral \eqref{eqn:mix}, it is a standard practice in sequential statistics literature to use the right hand side of \eqref{eqn:regret}, which is a log-e-process since it lower-bounds the log-e-process on the left hand side, to construct sequential tests and confidence sets.

We therefore study ME-estimators derived from this regret log-e-process, which takes the following ``saddle point'' minimax form
\begin{align}\label{eqn:saddle}
    \tilde\theta^{(E)}_n &\in \argmin_{\theta \in [0,1] }\left\{ \sup_{\lambda \in [-\frac{1}{1-\theta}, \frac{1}{\theta}] } \sum_{t=1}^n \log(1 + \lambda (X_t - \theta)) - \operatorname{Reg}_n \right\} \\ &= \argmin_{\theta \in [0,1] }\sup_{\lambda \in [-\frac{1}{1-\theta}, \frac{1}{\theta}] }\left\{ \sum_{t=1}^n \log(1 + \lambda (X_t - \theta)) \right\} = \argmin_{\theta \in [0,1] } \klinf(\theta).
\end{align}
The last equality above, we note, comes from the fact that the supremum here is known as the $\klinf$ statistic (multiplied by a constant $n$) studied by e.g.\ \cite{honda2010asymptotically,thesis}. $\klinf$ is also related to the empirical likelihood and dual likelihood \citep{Owen1990,mykland1995dual} well-studied in the nonparametric literature, see
\citet[Section 5.1]{wang2026almost} and our upcoming discussion in \cref{sec:el}.

The saddle point estimator \eqref{eqn:saddle}, as it turns out, is nothing but the sample mean.

\begin{theorem}[Sample mean as mixture/saddle point ME-estimator] \label{thm:saddle}
    Let $X_1, \dots, X_n \in [0,1]$ be any sequence of observations. The saddle point ME-estimator
    \begin{equation}\label{eqn:saddle-in-theorem}
        \tilde\theta^{(E)}_n = \argmin_{\theta \in [0,1] }\sup_{\lambda \in [-\frac{1}{1-\theta}, \frac{1}{\theta}] }\left\{ \sum_{t=1}^n \log(1 + \lambda (X_t - \theta)) \right\}
    \end{equation}
    is {exactly the sample mean}:
    \begin{equation}
        \tilde\theta^{(E)}_n = \bar{X}_n = \frac{1}{n}\sum_{t=1}^n X_t.
    \end{equation}
    Consequently, it is consistent $\tilde\theta^{(E)}_n \to \theta^*$ and efficient $\sqrt{n}(\tilde\theta^{(E)}_n - \theta^*) \stackrel d \to \mathcal{N}(0,\sigma^2)$ as the sample mean is.
\end{theorem}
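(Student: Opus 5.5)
The plan is to show that the objective $F(\theta) := \sup_{\lambda \in [-\frac{1}{1-\theta}, \frac{1}{\theta}]} \sum_{t=1}^n \log(1 + \lambda (X_t - \theta))$ is nonnegative for every $\theta$, that it equals zero at $\theta = \bar X_n$, and that it is strictly positive for every $\theta \neq \bar X_n$. Then $\bar X_n$ is the unique minimizer. Consistency and asymptotic normality then follow from the strong law and the classical CLT for the sample mean, so the only content is the identification of the argmin.

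\emph{Step 1 (nonnegativity).} Taking $\lambda = 0$, which is always feasible, gives the objective value $0$, so $F(\theta) \ge 0$ for all $\theta \in [0,1]$.

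\emph{Step 2 (zero at the mean).} At $\theta = \bar X_n$, Jensen's inequality (concavity of $\log$) gives for every feasible $\lambda$
\begin{equation*}
\frac1n\sum_{t=1}^n \log(1+\lambda(X_t-\bar X_n)) \le \log\Bigl(1+\lambda\cdot\frac1n\sum_{t=1}^n (X_t-\bar X_n)\Bigr) = \log 1 = 0.
\end{equation*}
Hence $F(\bar X_n) = 0$. Feasibility guarantees $1+\lambda(X_t-\theta)\ge 0$, and values of $-\infty$ cause no problem since the inequality is only an upper bound.

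\emph{Step 3 (strict positivity elsewhere).} Fix $\theta \ne \bar X_n$ and let $g(\lambda) = \sum_t \log(1+\lambda(X_t-\theta))$. This function is finite and smooth near $\lambda=0$, with $g(0)=0$ and $g'(0) = \sum_t (X_t-\theta) = n(\bar X_n - \theta) \ne 0$. Taking a small $\lambda$ with the sign of $\bar X_n-\theta$ gives $g(\lambda)>0$. This needs the feasible interval to contain a neighborhood of $0$ in the relevant direction, which is the main technical point. In the interior $\theta\in(0,1)$ the interval $[-\frac{1}{1-\theta},\frac1\theta]$ contains $0$ in its interior, so this is automatic. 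At the endpoints $\theta=0$ (resp.\ $\theta=1$) the interval degenerates on one side, but there the relevant sign is positive (resp.\ negative), because $\bar X_n > 0$ when $\theta=0\neq\bar X_n$, and similarly for $\theta=1$. So the needed direction is still available, with the convention $1/0=\infty$.

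The edge case $\bar X_n \in\{0,1\}$ (all observations equal $0$ or $1$) is covered by the same argument. Uniqueness of the minimizer then justifies writing ``$=$'' in \eqref{eqn:saddle-in-theorem}. The subtracted term $\operatorname{Reg}_n$ in \eqref{eqn:saddle} does not depend on $\theta$, so it does not change the argmin.
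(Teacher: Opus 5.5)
Your proof is correct, and Steps 1--2 are the paper's argument. The feasible choice $\lambda=0$ gives $F(\theta)\ge 0$, and $F(\bar X_n)=0$. The paper gets $F(\bar X_n)=0$ from concavity of $h(\lambda)=\sum_t\log(1+\lambda(X_t-\bar X_n))$ together with $h'(0)=0$; your Jensen step is a direct one-line substitute that needs no differentiability at the edge of the feasible interval.

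The real difference is your Step 3. The paper's appendix proof stops once it shows $\bar X_n$ attains the minimum value $0$. That justifies $\bar X_n\in\argmin_\theta F(\theta)$ but not the ``$=$'' in the statement; the main text only gestures at uniqueness through the primal form of $\klinf$. Your first-order perturbation argument shows $F(\theta)>0$ for every $\theta\neq\bar X_n$, so it actually proves uniqueness.

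One small correction: the feasible interval never degenerates. Since $1/(1-\theta)\ge 1$ and $1/\theta\ge 1$, we have $[-1,1]\subseteq[-\frac{1}{1-\theta},\frac{1}{\theta}]$ for every $\theta\in[0,1]$; at $\theta=0$ the interval is $[-1,\infty)$. Also $|X_t-\theta|\le 1$ keeps every factor positive for $|\lambda|<1$. So $0$ is always an interior point, and your endpoint case analysis reaches the right conclusion but is unnecessary.
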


The proof of \cref{thm:saddle} is in \cref{sec:pf-saddle} and involves simply checking the derivatives and extrema of the bivariate function $(\theta, \lambda)\mapsto \sum \log(1+\lambda(X_t - \theta))$. $(\theta, \lambda) = (\bar X_n, 0)$ is a saddle point of this function with global optimality \eqref{eqn:saddle-in-theorem} satisfied. Separately, one may also prove \cref{thm:saddle} via the ``primal'' form (see \cite{honda2010asymptotically}) of the $\klinf$ statistic, and argue that $\klinf(\theta) \ge 0$ with equality if and only if $\theta = \bar X_n$.

It is still of (at least) theoretical interest what properties ME-estimators derived \emph{directly} from the mixture, 
$\argmin_{\theta} \int  \prod_{t=1}^n (1 + \lambda (X_t - \theta)) \pi(\d \lambda) $,
satisfy, despite the likely incomputability. We conjecture that for well-behaved (e.g.\ continuous) mixing measures $\pi$, this ME-estimator is strongly consistent and relatively efficient.

\subsection{Buyer's optimization: $\klinf$, empirical and dual likelihoods}\label{sec:el}

We now reveal the deeper connection between our methods and $\klinf$, the empirical and dual likelihoods (EL \& DL) from the classical nonparametric literature \citep{Owen1990,mykland1995dual}, and discuss how all these methods jointly complete an open-ended game-theoretic picture.

The minimax saddle point form \eqref{eqn:saddle} of the mixture ME-estimator discussed above in fact introduces the other player of the game we described in \cref{sec:why-e-est} (``the game involves two parties, a seller and a buyer...'') to the estimation framework. To wit, the realized e-statistic $M_n^\theta$ is the bet payoff the seller pays post-observation to the buyer. The optimization for the general ME-estimator \eqref{eqn:e-estimator} $\operatorname{minimize}_{\theta \in \Theta} M_n^\theta$ corresponds to maximizing the seller's profit. Now, let us introduce another parameter $\lambda \in \Lambda$ to the e-statistics family, and assume that the doubly indexed family
\begin{equation}
    \{ M_n(\theta,\lambda) : \theta \in \Theta, \lambda \in \Lambda \}
\end{equation}
satisfies that, for all $\theta \in \Theta$, for all $\lambda \in \Lambda$, $M_n(\theta, \lambda)$ is an e-value or forms an e-process under $\Pr_{\theta}$. In the bounded betting game \cref{ex:betting} the fixed-fraction bets
\begin{equation}
    W_n(\theta, \lambda) = \prod_{t=1}^n (1 + \lambda(X_t - \theta))
\end{equation}
for such a family. Importantly, the seller picks $\theta \in \Theta$ to decide the bets to offer, and the buyer picks $\lambda \in \Lambda$ to decide on one bet among many against $\theta$.

The minimax saddle point formulation of the mixture ME-estimator \eqref{eqn:saddle}, as it now transpires, corresponds to a Nash equilibrium of this two-player game where the buyer and the seller both see the observations and pick their parameter in sync. As a side result of independent interest, we show in \cref{sec:subpsi} that for all sub-$\psi$ e-processes (which include a wide variety of nonparametric problems, see \cite{howard2020time}), the saddle point is always the sample mean.

We are now in a good position to discuss the $\klinf$, EL, and DL. In the bounded betting case, these methods all concern \emph{the buyer's optimization problem after the seller has picked $\theta \in \Theta$}. That is, the buyer, knowing the (null hypothesis) $\theta$ picked by the seller, picks the oracle best-in-hindsight $\lambda$ to maximize the profit $M_n(\theta,\lambda)$. This is a game-theoretic problem orthogonal to the one that underpins ME-estimators, where the seller minimizes the e-statistic while knowing the buyer's move. To see this, the $\klinf$, EL, and DL statistics all take the form of
\begin{equation}
    \max_{\lambda} \sum_{t=1}^n \log(1+\lambda(X_t - \theta)) = \log \left\{\max_{\lambda} W_n(\theta, \lambda)\right\},
\end{equation}
differing only in the ranges of $\lambda$. See \citet[Section 5.1]{wang2026almost} for more on this distinction. Together with the seller's minimization problem and both players' equilibrium problem, it completes a picture of three possible optimization perspectives that the e-statistics game offers. We summarize them in \cref{tab:games}.

\begin{table}[!h]
    \centering
    \begin{tabular}{c|c|c}
        Optimizing player & Formula & Statistical method  \\
        \hline
        Buyer  & $\operatorname{maximize}_{\lambda \in \Lambda} M_n(\theta, \lambda) $ & EL, DL, and $\klinf$ \\
        Seller  &  $\operatorname{minimize}_{\theta \in \Theta} M_n(\theta, \lambda) $ & ME-estimators \\
        Both (equilibrium) & $\min_{\theta \in \Theta} \max_{\lambda \in \Lambda} M_n(\theta, \lambda)$  & Sample mean (betting \& sub-$\psi$)
    \end{tabular}
    \caption{Three optimization problems in the e-statistics game}
    \label{tab:games}
\end{table}

\subsection{Some other related work}

There are a few other pieces of work related to ME-estimators.

First, the idea of defining a point estimator from a family of tests and the corresponding confidence set was previously explored by \cite{birge2006model} as the ``T-estimator'' as a robustified MLE. While both the ME- and T-estimator take on a ``test first, point estimator second'' approach, they are constructed on very different metrics (e-statistics vs.\ penalized distance).

Second, several variants of the ``sizing the confidence sets based on the uniform growth of e-statistics'' argument exist in the literature. \citet[Theorems 5 and 6]{wasserman2020universal} size the confidence sets derived from the universal inference (i.e.\ split-sample likelihood ratio) e-values,
based on
``uniform upper bounds on the likelihood ratio'' (which are lower bounds on the e-statistics). A recent result on the same subject (albeit in the opposite direction, on the largeness instead of the smallness of the confidence sets) is due to \cite{takatsu2025precise}. Another recent study by \citet[Theorem 4.6]{shekhar2023near} on the confidence sets derived from the $\klinf$-bearing betting e-processes
 (i.e.\ right hand side of \eqref{eqn:regret}) quantifies the uniform largeness of the $\klinf$ statistic. Note that the ME-estimator in this setting is the sample mean as demonstrated in \cref{thm:saddle}, bypassing the confidence set analysis.

\section{Summary}

We introduced ME-estimators, the point estimator that minimizes a family of e-statistics in hindsight. ME-estimators align with two fundamental understandings of e-statistics: a measure of evidence against the null hypothesis, and a form of betting payoff. We developed two approaches to analyze the theoretical properties of ME-estimators. In the first approach, we utilize the inherent safety (type 1 error) guarantee of e-statistics and relate the strong consistency and almost sure convergence rate of ME-estimators to the sizing of the confidence sets derived from thresholding e-statistics. In the second approach, we generalize classical analysis of M-estimators to ME-estimators from the bounded betting setup, arriving at the weak consistency and asymptotic normality of the betting ME-estimator, backed by experiments.

Our results justify the use of ME-estimators (and therefore the use of the e-statistics) as a point estimation framework that vies with many existing ones in the statistics literature. We expect future work to demonstrate further use cases of ME-estimators, beyond the bounded mean estimation examples we explored in detail here in this paper. In particular, there might be estimation problems where powerful e-statistics are much easier to construct than reasonable point estimators. These will likely be the scenarios where ME-estimators prevail the most.

\subsubsection*{Acknowledgments}

HW thanks Sivaraman Balakrishnan, Arun Kumar Kuchibhotla, Ian Waudby-Smith, Shubhada Agrawal, and Peter Gr\"unwald for helpful discussions.

\bibliography{main}

\newpage

\appendix

\section{Odds and ends}

\subsection{Saddle point with sub-$\psi$ e-processes}\label[appendix]{sec:subpsi}

Let us recall the definition of sub-$\psi$ processes from \cite{howard2020time}:
\begin{equation}
    \exp( \lambda H_n - \psi(\lambda) V_n ) \quad \text{is an e-process for any }\lambda \in [0, \lambda_{\max})
\end{equation}
where $\psi$ is ``CGF-like'' \citep[Definition 2]{howard2020time}. In this note, we require that $\psi$ is defined on $(-\lambda_{\min}, \lambda_{\max})$ that includes 0, and is CGF-like in the sense that both $\psi(x)$ on $[0,\lambda_{\max})$ and $\psi(-x)$ on $[0,\lambda_{\min})$ are. In fact, $\psi$ will often be even.

Let us also recall that $\psi^\star$ is the convex conjugate (Legendre-Fenchel transform) of $\psi$ defined as
\begin{equation}
    \psi^\star(u) = \sup_{\lambda \in (-\lambda_{\min}, \lambda_{\max}) }\{ \lambda u - \psi(\lambda) \}.
\end{equation}
It is easy to see that $\psi^\star$ is uniquely minimized at 0: ${\psi^\star}'(u) = {\psi'}^{-1}(u)$ so $\sg({\psi^\star}'(u)) = \sg(u)$.

In many of the examples enumerated by \citet[Table 3]{howard2020time}, we are testing the \emph{location} (not necessarily the mean) of i.i.d.\ $(X_n)$ is $\theta$ with $H_n(\theta) = \sum_{k=1}^n (X_k - \theta)$ and $V_n(\theta)$ depends on $(X_n)$ and $\theta$ (usually through $(X_n - \theta)$). For example, $V_n$ is usually a combination of
\begin{equation}
    n, \quad  \sum_{t=1}^n (X_t - \theta)^2, \quad \sum_{t=1}^n \Exp_{\theta}((X_t - \theta)^2).
\end{equation}
The third term above is unobservable and needs to be replaced by a reasonable upper bound and/or supplied by problem assumptions.

\begin{proposition}
    For the sub-$\psi$ location estimation problem where $H_n(\theta)  = \sum_{t=1}^n (X_t - \theta)$ and $V_n(\theta) > 0$ for a sub-$\psi$ pair for all $\theta$, the saddle point estimator always equals the sample mean:
    \begin{equation}
        \argmin_{\theta \in \mathbb R } \sup_{\lambda \in  (\lambda_{\min}, \lambda_{\max}) } \{\lambda H_n(\theta) - \psi(\lambda) V_n(\theta)\} = S_n/n
    \end{equation}
    where $S_n = X_1+\dots + X_n$.
\end{proposition}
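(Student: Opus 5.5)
The plan is to compute the inner supremum in closed form in terms of the convex conjugate $\psi^\star$, and then minimize the result over $\theta$. Fix $\theta$ and write $H = H_n(\theta)$, $V = V_n(\theta) > 0$. Factoring out $V$ gives
\begin{equation*}
\sup_{\lambda}\{\lambda H - \psi(\lambda) V\} = V \sup_{\lambda}\{\lambda (H/V) - \psi(\lambda)\} = V_n(\theta)\, \psi^\star\bigl(H_n(\theta)/V_n(\theta)\bigr).
\end{equation*}
So the objective becomes $G(\theta) := V_n(\theta)\,\psi^\star(H_n(\theta)/V_n(\theta))$. This uses only positivity of $V_n(\theta)$; no structure of $V_n$ in $\theta$ is needed.

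Next I will pin down the sign and the zero set of $\psi^\star$. Since $\psi$ is CGF-like on both sides, we have $\psi(0)=0$, $\psi'(0)=0$, and $\psi \ge 0$. Taking $\lambda=0$ gives $\psi^\star(u) \ge -\psi(0) = 0$ for all $u$. Moreover $\psi^\star(0) = \sup_\lambda\{-\psi(\lambda)\} = 0$.

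For $u \neq 0$, say $u>0$, I will show $\psi^\star(u) > 0$. Because $\psi'(0)=0$ and $\psi$ is differentiable at $0$, we have $\psi(\lambda) = o(\lambda)$. Hence for small $\lambda>0$ inside $[0,\lambda_{\max})$,
\begin{equation*}
\lambda u - \psi(\lambda) = \lambda(u - o(1)) > 0.
\end{equation*}
The case $u<0$ is symmetric, using small negative $\lambda$ inside $(-\lambda_{\min},0]$. This is where two-sidedness of the domain is used. It also agrees with the remark preceding the proposition that $\psi^\star$ is uniquely minimized at $0$.

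Combining these, $G(\theta) \ge 0$, with equality if and only if $H_n(\theta) = 0$, since $V_n(\theta)>0$ cannot cancel a strictly positive $\psi^\star$ value. Now $H_n(\theta) = S_n - n\theta$ vanishes exactly at $\theta = S_n/n$. At that point $G(S_n/n) = V_n\,\psi^\star(0) = 0$, and $G(\theta)>0$ for every other $\theta$. Therefore the argmin is the singleton $\{S_n/n\}$.

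The main obstacle is the strict positivity of $\psi^\star$ away from $0$, and possibly the value $+\infty$. The latter is harmless: it is still $>0$, and the objective $G$ may be $+\infty$ at such $\theta$ without affecting the argmin. I will also need to check, and state explicitly, that the CGF-like conditions from \citet{howard2020time} give $\psi(0)=\psi'(0^+)=0$ and nonnegativity on each side. If $\psi$ is only one-sidedly differentiable at $0$, I will use the right and left derivatives separately in the two cases.
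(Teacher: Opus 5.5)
Your proposal is correct and follows the same route as the paper. Both write the inner supremum through the convex conjugate $\psi^\star$ evaluated at $H_n(\theta)/V_n(\theta)$, then use that $\psi^\star \ge 0$ vanishes only at $0$, so the objective is zero exactly at $\theta = S_n/n$.

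Your version is in fact slightly more careful than the paper's in two places:
\begin{itemize}
\item You keep the positive factor $V_n(\theta)$. The paper silently drops this $\theta$-dependent factor inside the $\argmin$, which is harmless only because the minimum value is $0$.
\item You prove strict positivity of $\psi^\star$ away from $0$ directly from $\psi(\lambda) = o(\lambda)$. The paper instead relies on the sign argument ${\psi^\star}' = (\psi')^{-1}$.
\end{itemize}
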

\begin{proof}
Using (1)  $V_n(\theta) > 0$, (2) definition of $\psi^\star$, and (3) $\psi^\star$ is uniquely minimized at 0, we have
\begin{align}
   &  \argmin_{\theta \in \mathbb R } \sup_{\lambda } \left\{ \lambda(S_n - n\theta) - \psi(\lambda) V_n(\theta)  \right\},
   \\
   = & \argmin_{\theta \in \mathbb R } \sup_{\lambda } \left\{-\psi(\lambda)+ \frac{S_n - n\theta}{V_n(\theta)} \lambda   \right\},
   \\
    = & \argmin_{\theta \in \mathbb R } \psi^\star( (S_n -n\theta)/V_n(\theta) )
   \\
    = & S_n/n. \qedhere
\end{align}   
\end{proof}

In particular, we see that the expression of $V_n(\theta)$ does not matter.
Even when the sub-$\psi$ pair $(H_n(\theta), V_n(\theta))$ comes from mean estimation for heavy-tailed observations plus self-normalization, or median estimation for symmetric observations plus self-normalization, the estimator recommended by this game-theoretic heuristic is still the empirical mean $S_n/n$, which is suboptimal in these cases.

\section{Omitted proofs}

\subsection{Proof of \cref{thm:BC-epr}}\label[appendix]{sec:pf-bc-epr}

\begin{proof} 
Define the coverage events
\begin{equation}
    A_\alpha = \left\{  \sup_{n \ge 1}  M_n^{\theta^*}  \le 1/\alpha \right\}, \quad C_{\alpha, n} = \{ \theta^* \in  \CS_n^{\alpha} \} = \{ M^{\theta^*}_n \le 1/\alpha  \}.
\end{equation}
Then,
\begin{equation}
    \Pr_{\theta^*}( A_\alpha) \ge 1-\alpha, \quad  A_\alpha = \bigcap_{n=1}^\infty C_{\alpha, n},
\end{equation}
and for any $\alpha > \alpha'$,
\begin{equation}
    A_\alpha  \subseteq A_{\alpha'}, \quad C_{\alpha, n}  \subseteq C_{\alpha', n}.
\end{equation}
Now, consider the coverage events for the doubling scheme
\begin{equation}
    C_{k}^* = \left\{  \theta^* \in \bigcap_{n \ge n_k} \CS_n^* \right\} = \bigcap_{s=k}^\infty \bigcap_{n\in [n_s, n_{s+1})} C_{\alpha_s, n}.
\end{equation}
Note that, for any $s \ge k$ and $n \ge n_s$, $C_{\alpha_s, n} \supseteq C_{\alpha_k, n} \supseteq A_{\alpha_k}$. We see that $C_{k}^* \supseteq A_{\alpha_k}$. Therefore, we have constructed a nested set sequence
\begin{equation}
    C_1^* \subseteq C_2^* \subseteq \dots
\end{equation}
with $\Pr_{\theta^*}(C_k^*) \ge 1-\alpha_k$. Consider
\begin{equation}
    C^* = \bigcup_{k=1}^\infty C_k^* = \left\{ \text{there exists } k\ge 1 \text{ such that }  \theta^* \in \bigcap_{n \ge n_k} \CS_n^* \right\} 
\end{equation}
Then $\Pr_{\theta^*}(C^*) = 1$.
 \end{proof}

\subsection{Proof of \cref{ex:det-hedge}} \label[appendix]{sec:pf-det-hedge}

Note that $\sum_{k=1}^n \lambda_k^2 = \frac{1}{3}H_n \le \frac{\log n + 1}{3}$, and $\sum_{k=1}^n \lambda_k \ge \frac{2}{\sqrt{3}}\sqrt{n}$ since
\begin{equation}
    \sqrt{k} - \sqrt{k-1} = \frac{1}{\sqrt{k} + \sqrt{k-1}} \le \frac{1}{2\sqrt{k}}.
\end{equation}

We study the uniform growth of the family $\{  W_n^{H(\blamb)} (\theta) : \theta \neq \theta^* \}$ under $P$. To achieve this, for any $\theta < \theta^*$:
\begin{equation}
   W_n (\theta) \ge  \frac{1}{2} \prod_{k=1}^n (1 + \lambda_k(X_k - \theta) ).
\end{equation}
Therefore, noting that $\log(1+x) \ge x-x^2$ for all $|x| \le 1/\sqrt{3}$,
\begin{align}
   & \log W_n (\theta) \ge - \log 2 +  \sum_{k=1}^n \log( 1 + \lambda_k (X_k - \theta) )
    \\
   \ge & - \log 2 +  \sum_{k=1}^n \lambda_k (X_k - \theta) -  \sum_{k=1}^n \lambda_k^2 (X_k - \theta)^2
   \\
   \ge  & - \log 2 +  \sum_{k=1}^n \lambda_k (X_k - \theta) -  \sum_{k=1}^n \lambda_k^2
   \\
   \ge  & - \log 2 - \frac{\log n+1}{3} +  \sum_{k=1}^n \lambda_k (X_k - \theta) 
    \\
    = & - \log 2 - \frac{\log n+1}{3} + { \sum_{k=1}^n \lambda_k (\theta^* - \theta) } +  {\sum_{k=1}^n \lambda_k (X_k - \theta^*) }
    \\
    \ge & - \log 2 - \frac{\log n+1}{3} + {  2\sqrt{n/3}(\theta^* - \theta) } +  { \sum_{k=1}^n \lambda_k (X_k - \theta^*) }
\end{align}
Above, the third term is the first-order nonrandom ``signal'' term that drives the wealth growth;
the fourth term is the martingale random fluctuation term which we control by a standard Hoeffding:
\begin{equation}
    P^*\left( \sum_{k=1}^n \lambda_k (X_k - \theta^*) \le - t \right) \le \exp\left( -\frac{2t^2}{\sum_{k=1}^n \lambda_k^2 } \right) \le \exp\left( -\frac{6t^2}{\log n + 1} \right).
\end{equation}
In sum,
\begin{equation}
    P^* \left( \forall \theta \in [0,\theta^*),\  \log W_n(\theta) \ge  - \log 2 - \frac{\log n+1}{3} +  2\sqrt{n/3}(\theta^* - \theta) - t  \right) \ge 1 - \exp\left( -\frac{6t^2}{\log n + 1} \right).
\end{equation}
A simple union bound gives
\begin{equation}
    P^* \left( \forall \theta \neq \theta^*,\  \log W_n(\theta) \ge  - \log 2 - \frac{\log n+1}{3} +  2\sqrt{n/3}|\theta^* - \theta| - t  \right) \ge 1 - 2\exp\left( -\frac{6t^2}{\log n + 1} \right).
\end{equation}
Taking $2\exp\left( -\frac{6t^2}{\log n + 1} \right)  =\delta$, we arrive at
\begin{equation}
  P^* \left( \forall \theta \neq \theta^*,\  \log W_n(\theta) \ge 2\sqrt{n/3}|\theta^* - \theta| - \log 2 - \frac{\log n+1}{3}   - \sqrt{\frac{\log(2en/\delta)}{6}} \right) \ge 1 - \delta.
\end{equation}
Consequently,
\begin{equation}
    P^*\left( |\CS_n^{\alpha}| \le  \frac{\log(1/\alpha) + \log 2 + \frac{\log n + 1}{3} + \sqrt{\frac{\log(2en/\delta)}{6}}  }{\sqrt{n/3}}  \right) \ge 1-\delta,
\end{equation}
and
\begin{equation}
     \tilde \theta_n^{(E)} = \theta^* + \mathcal{O}_{a.s.}(n^{-1/2}\log n )
\end{equation}

\subsection{Proof of \cref{ex:kt-ugi}} \label[appendix]{sec:pf-kt}

\begin{proof}[Proof of \cref{ex:kt-ugi}]
Using $\log(1+x) \ge x - x^2$ for $x \ge -1/2$, we have
\begin{equation}\label{eq:kt-log-lower}
\log W_n^\theta \ge \sum_{t=1}^n \lambda_t^{KT}(\theta) (X_t - \theta) - \sum_{t=1}^n \left(\lambda_t^{KT}(\theta)\right)^2 (X_t - \theta)^2.
\end{equation}
We will now decouple $\theta$ from the stochastic terms. Let $Y_t = X_t - \theta^*$ and $\Delta = \theta^* - \theta$. Let $H_t = (1/2 - \theta^*) + \sum_{i=1}^{t-1} Y_i$, so that $\lambda_t^{KT}(\theta) = \frac{H_t}{2t} + \frac{\Delta}{2}$. 

Notice that $X_t - \theta = Y_t + \Delta$. For the linear term in \eqref{eq:kt-log-lower}, we have:
\begin{equation*}
\lambda_t^{KT}(\theta) (X_t - \theta) = \left( \frac{H_t}{2t} + \frac{\Delta}{2} \right) (Y_t + \Delta) = \frac{H_t Y_t}{2t} + \frac{H_t \Delta}{2t} + \frac{\Delta Y_t}{2} + \frac{\Delta^2}{2}.
\end{equation*}
For the quadratic term, since $(X_t - \theta)^2 \le 1$, we can bound it as:
\begin{equation*}
\left(\lambda_t^{KT}(\theta)\right)^2 (X_t - \theta)^2 \le \left( \frac{H_t}{2t} + \frac{\Delta}{2} \right)^2 = \frac{H_t^2}{4t^2} + \frac{H_t \Delta}{2t} + \frac{\Delta^2}{4}.
\end{equation*}
Subtracting the two bounds, we have,
\begin{align*}
\log W_n^\theta &\ge \sum_{t=1}^n \left( \frac{\Delta^2}{4} + \Delta \frac{Y_t}{2} + \frac{H_t Y_t}{2t} - \frac{H_t^2}{4t^2} \right) \\
&= \frac{n}{4} \Delta^2 + \Delta \left( \frac{1}{2} \sum_{t=1}^n Y_t \right) + \sum_{t=1}^n \frac{H_t Y_t}{2t} - \sum_{t=1}^n \frac{H_t^2}{4t^2}.
\end{align*}

We now apply concentration inequalities to the three stochastic components with probability $1-\delta/3$ each. 
First, $Y_t$ is a martingale difference sequence with $Y_t \in [-\theta^*, 1-\theta^*]$ of range $1$. By the Azuma-Hoeffding inequality, with probability $\ge 1 - \delta/3$,
\begin{equation*}
|A| := \left| \frac{1}{2} \sum_{t=1}^n Y_t \right| \le \frac{1}{2} \sqrt{\frac{n}{2} \log \frac{6}{\delta}} =: K_1.
\end{equation*}

Second, $H_t$ is the sum of independent bounded variables. By Hoeffding's inequality and a union bound, with probability $\ge 1 - \delta/3$, we have $|H_t| \le 1 + \sqrt{\frac{t}{2} \log \frac{6n}{\delta}}$ simultaneously for all $t \in [1, n]$. On this event, we can deterministically bound $C := \sum_{t=1}^n \frac{H_t^2}{4t^2}$:
\begin{equation*}
C \le \sum_{t=1}^n \frac{1}{4t^2} \left( 1 + \sqrt{\frac{t}{2} \log \frac{6n}{\delta}} \right)^2 \le \sum_{t=1}^n \frac{2}{4t^2} + \sum_{t=1}^n \frac{t}{8t^2} \log \frac{6n}{\delta} \le \frac{\pi^2}{12} + \frac{1}{8} \left(\log \frac{6n}{\delta}\right) (\log n + 1) =: C_n.
\end{equation*}

Third, $B := \sum_{t=1}^n \frac{H_t Y_t}{2t}$ is a martingale. Let us bound the stopped sum $B^\tau := \sum_{t=1}^{n\wedge \tau} \frac{H_t Y_t}{2t}$ where $\tau$ is last time before $H_t$ escapes its high-probability bounds proved above
\begin{equation}
    \tau = \max\left\{ t : |H_t| \le 1 + \sqrt{\frac{t}{2}\log \frac{6n}{\delta}} \right\},
\end{equation}
which is a stopping time due to the predictability of $(H_t)$.
Applying the Azuma-Hoeffding inequality to the stopped martingale prior to the escaping of $(H_t)$, its increments are bounded by $c_t = \frac{1 + \sqrt{\frac{t}{2} \log \frac{6n}{\delta}}}{2t}$, and $\sum_{t=1}^n c_t^2 \le C_n$. Thus, with probability $\ge 1 - \delta/3$, we have $|B^\tau| \le \sqrt{ \frac{C_n}{2} \log \frac{6}{\delta} }$.

Intersecting these three events, with probability $\ge 1-\delta$ uniformly over $\theta$, we obtain:
\begin{equation*}
\log W_n^\theta \ge \frac{n}{4} |\theta - \theta^*|^2 - K_1 |\theta - \theta^*| - \left( \sqrt{ \frac{C_n}{2} \log \frac{6}{\delta} } + C_n \right).
\end{equation*}
This establishes \eqref{eqn:kt-ugi}, noting $K_1 = Kn^{1/2} = \sqrt{\frac{n}{8} \log \frac{6}{\delta}}$. By letting $D_n(x, \delta) = \frac{n}{4} x^2 - K_1 x - K_2$, the inverse $\varphi_n(h, \delta)$ is given by the positive root $\frac{K_1 + \sqrt{K_1^2 + n(K_2 + h)}}{n/2} \le \frac{2K_1}{n/2} + \frac{\sqrt{n(K_2+h)}}{n/2}$, where $K_2 = \sqrt{ \frac{C_n}{2} \log \frac{6}{\delta} } + C_n$.
Using \cref{thm:master}, replacing $h$ by $\log(1/\alpha)$ yields the high-probability bound on the diameter of $\CS_n^\alpha$. Substituting $\alpha_n = n^{-1-\varepsilon}$ and $\delta_n = n^{-1-\varepsilon}$ yields $K_1 = \mathcal{O}(\sqrt{n \log n})$ and $K_2 = \mathcal{O}(\log^2 n)$, which gives $\diam(\CS_n^{\alpha_n}) = \mathcal{O}_{a.s.}(n^{-1/2} \log n)$, concluding the proof.
\end{proof}

\subsection{Proof of \cref{ex:pacbayes-sn}} \label[appendix]{sec:pf-pacbayes-sn}

\begin{proof} Recall that
    \begin{equation}
  \log  M_n^\theta = \left\| \sum_{t=1}^n \lambda_t (X_t - \theta) \right\| - \sum_{t=1}^n \frac{\lambda_t^2}{6}\left(\|   X_t - \theta \|^2(1+\beta^{-1}) + 2\|\Sigma\|_{\mathrm{op}} + \beta^{-1} \tr\Sigma  \right) - \frac{\beta}{2}.
\end{equation}
Under the ground truth $\theta^*$, we have
\begin{align}
    \left\| \sum_{t=1}^n \lambda_t (X_t - \theta) \right\| \ge   \left\| \sum_{t=1}^n \lambda_t (\theta^* - \theta) \right\| -  \left\| \sum_{t=1}^n \lambda_t (X_t - \theta^*) \right\|
\end{align}
The first part is deterministic, and the second part, decoupled from $\theta$, can be high-probability bounded by the log-e-process itself: with probability at least $1-\delta$,
\begin{equation}
    \left\| \sum_{t=1}^n \lambda_t (X_t - \theta^*) \right\| \le \sum_{t=1}^n \frac{\lambda_t^2}{6}\left(\|   X_t - \theta^* \|^2(1+\beta^{-1}) + 2\|\Sigma\|_{\mathrm{op}} + \beta^{-1} \tr\Sigma  \right) + \frac{\beta}{2} + \log(1/\delta).
\end{equation}
We further note that
\begin{equation}
    \| X_t - \theta \|^2 \le  2\| X_t - \theta^* \|^2 + 2\| \theta - \theta^* \|^2.
\end{equation}
Combining the bounds above, we have, with probability $1-\delta$, for all $\theta\in\mathbb R^d$:
\begin{align}
    & \log M_n^\theta \\
    \ge & \| \theta^* - \theta \|\sum_{t=1}^n \lambda_t - \sum_{k=1}^n \frac{\lambda_t^2}{6}\left(( 2\| X_t - \theta^* \|^2 + 2\| \theta - \theta^* \|^2)(1+\beta^{-1}) + 2\|\Sigma\|_{\mathrm{op}} + \beta^{-1} \tr\Sigma  \right) - \frac{\beta}{2} \\
    - & \sum_{t=1}^n \frac{\lambda_t^2}{6}\left(\|   X_t - \theta^* \|^2(1+\beta^{-1}) + 2\|\Sigma\|_{\mathrm{op}} + \beta^{-1} \tr\Sigma  \right) - \frac{\beta}{2} - \log(1/\delta)
    \\
    = & \| \theta^* - \theta \|\sum_{t=1}^n \lambda_t  - \| \theta^* - \theta \|^2 \frac{1+\beta^{-1}}{3} \sum_{t=1}^n \lambda_t^2 \\
    - & \sum_{t=1}^n \frac{\lambda_t^2}{6}\left( 3\|   X_t - \theta^* \|^2(1+\beta^{-1}) + 4\|\Sigma\|_{\mathrm{op}} + 2\beta^{-1} \tr\Sigma  \right) - \beta - 2\log(1/\delta).
\end{align}
With the additional assumption that $\|X_t\|\le L$:
\begin{align}
     & \log M_n^\theta \\
     \ge & \| \theta^* - \theta \|\sum_{t=1}^n \lambda_t  - 2L\cdot \| \theta^* - \theta \| \frac{1+\beta^{-1}}{3} \sum_{t=1}^n \lambda_t^2 \\
    - & \sum_{t=1}^n \frac{\lambda_t^2}{6}\left( 3\|   X_t - \theta^* \|^2(1+\beta^{-1}) + 4\|\Sigma\|_{\mathrm{op}} + 2\beta^{-1} \tr\Sigma  \right) - \beta - 2\log(1/\delta).
\end{align}
Therefore, by \cref{thm:master},
\begin{equation} \scriptsize
    P^* \left( \diam(\CS_n^{\alpha}) \le  \frac{ 2\sum_{t=1}^n \frac{\lambda_t^2}{6}\left( 3\|   X_t - \theta^* \|^2(1+\beta^{-1}) + 4\|\Sigma\|_{\mathrm{op}} + 2\beta^{-1} \tr\Sigma  \right) + 2\beta + 4\log(1/\delta) + 2\log(1/\alpha)   }{
        \sum_{t=1}^n \lambda_t - \frac{2L(1+\beta^{-1})}{3} \sum_{t=1}^n \lambda_t^2
    }\right)   \ge 1-\delta,
\end{equation}
Setting $\alpha = \delta = 1/n^2$, we see that almost surely, for all but finitely many $n$,
\begin{equation} 
   \scriptsize \| \theta^* - \tilde \theta_n^{(E)} \| \le\frac{ 2\sum_{t=1}^n \frac{\lambda_t^2}{6}\left( 3\|   X_t - \theta^* \|^2(1+\beta^{-1}) + 4\|\Sigma\|_{\mathrm{op}} + 2\beta^{-1} \tr\Sigma  \right) + 2\beta + 12\log n  }{
        \sum_{t=1}^n \lambda_t - \frac{2L(1+\beta^{-1})}{3} \sum_{t=1}^n \lambda_t^2
    }
\end{equation}
Taking $\lambda_t = 1/\sqrt{t}$, the right-hand side is $\mathcal{O}_{a.s.}(n^{-1/2}\log n)$, because the only randomness here satisfies
\begin{equation}
    \sum_{t=1}^n t^{-1} \| X_t - \theta^* \|^2 =  \sum_{t=1}^n t^{-1} (\| X_t - \theta^* \|^2 - \Exp^* \| X_t - \theta^* \|^2) + \Exp^* \| X_t - \theta^* \|^2 \sum_{t=1}^n t^{-1}  = \mathcal{O}_{a.s.}(\log n).
\end{equation}
Consequently, $\| \theta^* - \tilde \theta_n^{(E)} \| = \mathcal{O}_{a.s.}(n^{-1/2}\log n)$.
\end{proof}

\subsection{Proof of \cref{cor:consistency-kt-agrapa}} \label[appendix]{sec:pf-consistency-kt-agrapa}
\begin{proof}
    
To this end we must verify three conditions: (1) uniform convergence of the bet fractions $\lambda_t(\theta)$ to $\lambda^*(\theta)$, 
 (2) uniform convergence of the sequential log-wealth $\frac{1}{n} w_n(\theta)$ to $M(\theta)$, and (3) the strict positivity condition $M(\theta) > \delta > 0$ for all $\theta$ such that $|\theta - \theta^*| \ge \varepsilon$.

\paragraph{Step 1: Uniform convergence of $\lambda_t(\theta)$} First, let us first establish that $\lambda_t(\theta) \to \lambda^*(\theta)$ uniformly almost surely over $\theta \in \Theta$ for both KT and aGRAPA. The bets $\lambda_t^{KT}(\theta) = \frac{1/2 + \sum_{i=1}^{t-1} (X_i - \theta)}{t c}$ amd $\lambda_t^{aG}(\theta) = \frac{\bar{X}_{t-1} - \theta}{\hat{\sigma}_{t-1}^2 + (\bar{X}_{t-1} - \theta)^2}$ are both Lipschitz continuous functions on the sample mean, sample variance (which are consistent), and $\theta$ in a neighborhood containing $\{ \theta^* \} \times \{ \sigma^2 \} \times [0,1]$. Define $\lambda^{KT*}(\theta)= (\theta^* - \theta)/c$ and $\lambda^{aG*}= \frac{\theta^* - \theta}{\sigma^2 + (\theta^* - \theta)^2}$. We have,
\begin{gather}
    \sup_{\theta \in \Theta}| \lambda_t(\theta)  - \lambda^*(\theta)  | \lesssim |\bar{X}_{t-1} - \theta^* | + |\hat{\sigma}_{t-1}^2  - \sigma^2|\xrightarrow{a.s.} 0
\end{gather}
for both strategies.

\paragraph{Step 2: Uniform convergence of $\frac{1}{n}w_n(\theta)$}
With uniform convergence of $\lambda_t(\theta)$ in place, the uniform convergence of $\frac{1}{n}w_n(\theta)$ to $M(\theta)$ is straightforward. To see that,
\[ \frac{1}{n} w_n(\theta) = \frac{1}{n} \sum_{t=1}^n \big[ \log(1 + \lambda_t(\theta)(X_t - \theta)) - \log(1 + \lambda^*(\theta)(X_t - \theta)) \big] + \frac{1}{n} \sum_{t=1}^n \log(1 + \lambda^*(\theta)(X_t - \theta)) \]
 By Lipschitzness of $\log(1+x)$ on the bounded support $|x| \le 0.6$, the first difference term is bounded by $\sup_{\theta} |\lambda_t(\theta) - \lambda^*(\theta)| \cdot |X_t - \theta|$, which converges almost surely to $0$. Thus, the first summation vanishes uniformly. The second summation is a standard empirical average of i.i.d.\ variables evaluated through a continuous function of $\theta$. We now apply the uniform law of large numbers (e.g.\ Theorem 3.1(vi) in \cite{dasgupta2008asymptotic}) to the bounded, continuous function family
\begin{equation}
  g(x,\theta)= \log (1 + \lambda^*(\theta)(x - \theta)) 
\end{equation}
and obtain that
\begin{equation}
    \sup_{\theta \in \Theta} \left| \frac{1}{n} \sum_{t=1}^n \log(1 + \lambda^*(\theta)(X_t - \theta)) - \Exp_{\theta^*}\big[\log(1 + \lambda^*(\theta)(X - \theta))\big] \right| \xrightarrow{p} 0.
\end{equation}
This formally establishes the continuous limit function $M(\theta)$.

\paragraph{Step 3: Positivity} 
We apply the bound: $\log(1+x) > x - 0.9x^2$ for $|x| \le 0.6$ and $x \neq 0$:
\begin{equation}
    M(\theta)  > \Exp_{\theta^*}\left[ \lambda^*(\theta) (X - \theta)  -0.9{\lambda^{*2}(\theta) (X - \theta)^2} \right] = \lambda^*(\theta) (\theta^* - \theta)  - 0.9 \lambda^{*2}(\theta)\cdot (\sigma^2 + (\theta^* - \theta)^2).
\end{equation}

For aGRAPA, the limit fraction evaluates to $\lambda^*(\theta) = \frac{\theta^* - \theta}{\sigma^2 + (\theta^* - \theta)^2} \wedge 0.6 \vee -0.6 $. We see that 
\[ M(\theta) >  \lambda^*(\theta) (\theta^* - \theta)  -  {\lambda^{*2}(\theta)\cdot (\sigma^2 + (\theta^* - \theta)^2)} =  0.1{\lambda^{*2}(\theta)\cdot (\sigma^2 + (\theta^* - \theta)^2)}  \]
if $|\theta^* - \theta| \le 0.6(\sigma^2 + (\theta^* - \theta)^2)$;
\[
M(\theta) > 0.6|\theta^* - \theta| - 0.6^2( \sigma^2 + (\theta^* - \theta)^2 ) > 0.6(0.6( \sigma^2 + (\theta^* - \theta)^2 ))  - 0.6^2( \sigma^2 + (\theta^* - \theta)^2 ) = 0
\]
if $|\theta^* - \theta| > 0.6(\sigma^2 + (\theta^* - \theta)^2)$.

For KT, the asymptotic bet fraction is $\lambda^*(\theta) = (\theta^* - \theta)/c$. We see that
\[ M(\theta) >  \lambda^*(\theta) (\theta^* - \theta)  -  {\lambda^{*2}(\theta)\cdot (\sigma^2 + (\theta^* - \theta)^2)} = \frac{(\theta^* - \theta)^2}{c}\left( 1 - \frac{\sigma^2 + (\theta^* - \theta)^2}{c} \right)  \]
where $ 1 - \frac{\sigma^2 + (\theta^* - \theta)^2}{c} > 0$ as long as $c \ge 1/0.6$.

To conclude, for both KT and aGRAPA, we lower bound $M(\theta)$ by some function of $|\theta - \theta^*|$ with 0 as a well-separated minimum.
Thus, both KT and aGRAPA satisfy all requirements of \cref{thm:consistency}, guaranteeing consistency of their ME-estimators.
\end{proof}

\subsection{Proof of \cref{thm:clt-me}}\label[appendix]{sec:pf-clt}

When proving \cref{thm:clt-me}, we repeatedly implicitly invoke the following lemma: if a martingale's predictable quadratic variation is sublinear, then the martingale is $o_p(\sqrt{n})$.

\begin{lemma} \label{lem:martingale_op}
 Let $M_n = \sum_{t=1}^n X_t$ be a square-integrable martingale with predictable quadratic variation $\langle M \rangle_n = \sum_{t=1}^n \Exp[X_t^2 \mid \mathcal{F}_{t-1}]$.
If $\langle M \rangle_n = o_p(n)$, then $M_n = o_p(\sqrt{n})$.
\end{lemma}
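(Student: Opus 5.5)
The plan is the standard stopping/truncation argument combined with Lenglart's inequality; the key tool is a Lenglart-type domination bound. For any $\epsilon,\eta>0$, Lenglart's inequality applied to the nonnegative submartingale $M_n^2$, which is dominated by the predictable increasing process $\langle M\rangle_n$, gives
\begin{equation}
  \Pr\bigl(M_n^2 \ge \epsilon^2 n\bigr) \le \frac{\eta n}{\epsilon^2 n} + \Pr\bigl(\langle M\rangle_n \ge \eta n\bigr).
\end{equation}
If one prefers to avoid quoting Lenglart, the same bound follows by hand. Introduce the predictable stopping time $\tau = \min\{ k : \langle M\rangle_{k} > \eta n\}$. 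Since $\langle M\rangle_k$ is $\mathcal F_{k-1}$-measurable, $\{t \le \tau - 1\}$ is predictable. The stopped martingale $M_{n\wedge(\tau-1)}$ then satisfies $\Exp[M_{n\wedge(\tau-1)}^2] = \Exp[\langle M\rangle_{n\wedge(\tau-1)}] \le \eta n$.

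The proof then proceeds in three steps. First, by Chebyshev,
\begin{equation}
  \Pr\bigl(|M_{n\wedge(\tau-1)}| \ge \epsilon\sqrt n\bigr) \le \eta/\epsilon^2 .
\end{equation}
Second, on the event $\{\langle M\rangle_n \le \eta n\}$ we have $\tau > n$, so $M_n = M_{n\wedge(\tau-1)}$. Hence
\begin{equation}
  \Pr\bigl(|M_n| \ge \epsilon\sqrt n\bigr) \le \frac{\eta}{\epsilon^2} + \Pr\bigl(\langle M\rangle_n > \eta n\bigr).
\end{equation}
Third, take $\limsup_{n}$. The hypothesis $\langle M\rangle_n = o_p(n)$ kills the last term, leaving $\limsup_n \Pr(|M_n|\ge\epsilon\sqrt n) \le \eta/\epsilon^2$. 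Since $\eta$ is arbitrary, the conclusion $M_n = o_p(\sqrt n)$ follows.

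The only delicate point is getting the measurability right: the stopping index must be chosen as $\tau-1$, using predictability of $\langle M\rangle$, so that the stopped quadratic variation is genuinely bounded by $\eta n$ rather than overshooting by one increment. With that handled, the rest is routine. In applications within the triangular-array setting of \cref{thm:clt-me}, the martingale may depend on $n$; the same argument applies verbatim to each row, because all bounds above are uniform in the row index.
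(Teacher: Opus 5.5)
Your proof is correct and is essentially the paper's argument. Your $\tau-1$ is exactly the paper's stopping time $\tau_n=\inf\{k\ge 0:\langle M\rangle_{k+1}>\delta n\}$, and the rest matches step for step: Chebyshev on the stopped martingale, the inclusion $\{\tau_n<n\}\subseteq\{\langle M\rangle_n>\delta n\}$, then $\limsup_n$ and the arbitrariness of $\eta$. Your Lenglart formulation is just the packaged form of the same stopping argument, and your row-by-row remark is a worthwhile addition, since the proof of \cref{thm:clt-me} applies the lemma to sums with $n$-dependent weights.
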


\begin{proof}
We wish to show that for any fixed $\epsilon > 0$, 
\[ \lim_{n \to \infty} \Pr\Big( |M_n| \ge \epsilon \sqrt{n} \Big) = 0. \]
Fix a $\delta > 0$ and an $n \ge 1$. Since $\langle M \rangle$ is predictable,
\[ \tau_n = \inf\big\{ k \ge 0 : \langle M \rangle_{k+1} > \delta n \big\} \]
 is a stopping time.
Consider the stopped martingale $\tilde{M}_t = M_{t \wedge \tau_n}$. Its increments are $\tilde{X}_t = X_t \mathbbm{1}_{\{t \le \tau_n\}}$ and its predictable quadratic variation is
\[ \langle \tilde{M} \rangle_n = \sum_{t=1}^n \Exp\left[ \tilde{X}_t^2 \mid \mathcal{F}_{t-1} \right] = \sum_{t=1}^{n \wedge \tau_n} \Exp[X_t^2 \mid \mathcal{F}_{t-1}] = \langle M \rangle_{n \wedge \tau_n}. \]
By the definition of $\tau_n$, $\langle \tilde{M} \rangle_n = \langle M \rangle_{n\wedge \tau_n} \le \delta n$.
Consequently
$ \Exp\big[ \tilde{M}_n^2 \big] = \Exp\big[ \langle \tilde{M} \rangle_n \big] \le \delta n $, 
which via Chebyshev's inequality implies,
\[ \Pr\Big( |\tilde{M}_n| \ge \epsilon \sqrt{n} \Big) \le \frac{\Exp[\tilde{M}_n^2]}{\epsilon^2 n} \le \frac{\delta n}{\epsilon^2 n} = \frac{\delta}{\epsilon^2}. \]

Now, we relate the stopped martingale back to the original martingale $M$. Notice that $M_n=\tilde{M}_n$  on the event $\{\tau_n \ge n\}$. Therefore,
\begin{align*}
\Pr\Big( |M_n| \ge \epsilon \sqrt{n} \Big) 
&\le \Pr\Big( |\tilde{M}_n| \ge \epsilon \sqrt{n} \text{ and } \tau_n \ge n \Big) + \Pr(\tau_n < n) \\
&\le \Pr\Big( |\tilde{M}_n| \ge \epsilon \sqrt{n} \Big) + \Pr(\tau_n < n)
 \le \frac{\delta}{\epsilon^2} + \Pr\Big( \langle M \rangle_n > \delta n \Big)
\end{align*}
where the last step follows from $\{\tau_n < n\} \subseteq \{ \langle M \rangle_n > \delta n \}$. 
Taking limit and using the assumption $\langle M_n \rangle = o_p(n)$,
\[ \limsup_{n \to \infty} \Pr\Big( |M_n| \ge \epsilon \sqrt{n} \Big) \le \frac{\delta}{\epsilon^2}. \]
The arbitrariness of $\delta > 0$ (recall that we \emph{first} fixed $\epsilon$ \emph{then} fixed $\delta$) finally implies that
\[ \limsup_{n \to \infty} \Pr\Big( |M_n| \ge \epsilon \sqrt{n} \Big) = 0, \]
concluding the proof.
\end{proof}

We are now ready to prove \cref{thm:clt-me}.

\begin{proof}[Proof of \cref{thm:clt-me}]
Because $\tilde\theta_n^{(E)} \xrightarrow{p} \theta^*$ and $\theta^*$ is strictly in the interior of the compact set $\Theta$, the event $E_n = \{\tilde\theta_n^{(E)} \in \text{int}(\Theta)\}$ occurs with probability approaching 1 under $P^*$ as $n \to \infty$. Conditioned on $E_n$, the estimator $\tilde\theta_n^{(E)}$ is a local minimum, thus satisfying the first-order condition $S_n(\tilde\theta_n^{(E)}) = 0$. Using Taylor expansion on $E_n$, we have:
\begin{equation}\label{eqn:taylor-score}
   \sqrt{n}(\tilde\theta_n^{(E)} - \theta^*)
   = -\left(\frac{1}{n}S_n'(\theta^*) + \frac{1}{2n}S_n''(\bar\theta_n)(\tilde\theta_n^{(E)} - \theta^*)\right)^{-1}
      \left(\frac{1}{\sqrt{n}}S_n(\theta^*)\right),
\end{equation}
for some $\bar\theta_n$ between $\theta^*$ and $\tilde\theta_n^{(E)}$. Since $P^*(E_n) \to 1$, the asymptotic distribution of $\sqrt{n}(\tilde\theta_n^{(E)} - \theta^*)$ is entirely determined by the right hand side of \eqref{eqn:taylor-score}, since for all $x\in\mathbb R$:
\begin{equation}
   | P^*( \text{LHS of \eqref{eqn:taylor-score}} \le x) -   P^*( \text{RHS of \eqref{eqn:taylor-score}} \le x) | \le 1 - P^*(E_n).
\end{equation}
Thus, the proof requires studying the limit of the score $n^{-1/2}S_n(\theta^*)$, the limit of the derivative $n^{-1}S_n'(\theta^*)$, and verifying the remainder vanishes.

\paragraph{Step 1: CLT for the score at $\theta^*$}
The score evaluated at the true parameter is
\[
   S_n(\theta^*) = \sum_{t=1}^n \frac{\lambda_t'(\theta^*)(X_t - \theta^*) - \lambda_t(\theta^*)}{1 + \lambda_t(\theta^*)(X_t - \theta^*)} =: \sum_{t=1}^n Y_{t}.
\]
Using the identity $(1+u)^{-1} = 1 - u + \frac{u^2}{1+u}$, we decompose the summand $Y_{t}$. Let $Z_t := X_t - \theta^*$, $u_t := \lambda_t(\theta^*)Z_t$, and $N_t := \lambda_t'(\theta^*)Z_t - \lambda_t(\theta^*)$. Then:
\begin{align*}
  Y_{t} &= \frac{N_t}{1+u_t} = N_t (1 - u_t) + N_t \frac{u_t^2}{1+u_t} \\
  &= \lambda_t'(\theta^*)Z_t - \lambda_t(\theta^*) - \lambda_t'(\theta^*)\lambda_t(\theta^*)Z_t^2 + \lambda_t(\theta^*)^2 Z_t + N_t \frac{u_t^2}{1+u_t} \\
  &=: Y_{1t} + Y_{2t} + Y_{3t} + Y_{4t} + Y_{5t}.
\end{align*}
Let us analyze the sum of each term above.
\begin{enumerate}
    \item $\sum Y_{1t} = -\beta \sum Z_t + \sum (\lambda_t'(\theta^*) + \beta)Z_t$. The second sum is a martingale whose predictable quadratic variation is $\sigma^2 \sum (\lambda_t'(\theta^*) + \beta)^2$. By (A1), $\lambda_t'(\theta^*) \xrightarrow{L^2} -\beta$, so the variance is $o_p(n)$, implying that the martingale is $o_p(\sqrt{n})$. To conclude: $$\sum Y_{1t} = -\beta \sum Z_t + o_p(\sqrt{n}).$$
    \item $\sum Y_{2t} = -\sum \lambda_t(\theta^*) = -\beta \sum \bar Z_{t-1} - \sum r_t$. By (A2), $\sum |r_t| = o_p(\sqrt{n})$, yielding $$\sum Y_{2t} = -\beta \sum \bar Z_{t-1} + o_p(\sqrt{n}).$$
    \item $\sum Y_{3t} = \sum -\lambda_t'(\theta^*)\lambda_t(\theta^*) Z_t^2$ and $\Exp_{t-1} Y_{3t} = -\lambda_t'(\theta^*)\lambda_t(\theta^*) \sigma^2$. Let us analyze $\sum \lambda_t'(\theta^*)\lambda_t(\theta^*)$ and $\sum \lambda_t'(\theta^*)\lambda_t(\theta^*) (Z_t^2 - \sigma^2)$.
    \begin{enumerate}
        \item $\sum \lambda_t'(\theta^*)\lambda_t(\theta^*) = \sum (\lambda_t'(\theta^*) + \beta)\lambda_t(\theta^*) - \sum \beta \lambda_t(\theta^*)$. The first sum here is $o_p(\sqrt{n})$ by the same argument as the martingale part in Term 1 above. The second sum here is $\beta^2 \bar Z_{t-1} +  o_p(\sqrt{n})$ by (A2).
        \item $\sum \lambda_t'(\theta^*)\lambda_t(\theta^*)(Z_t^2 - \sigma^2) = \sum (\lambda_t'(\theta^*) + \beta)\lambda_t(\theta^*)(Z_t^2 - \sigma^2) - \sum \beta \lambda_t(\theta^*)(Z_t^2 - \sigma^2)$. The first sum here is $o_p(\sqrt{n})$ by the same argument as the martingale part in Term 1 above. For the second part, it is a martingale with predictable quadratic variation $O_{a.s.}(\sum \lambda_t^2(\theta^*)) \stackrel{\text{(A2)}}= O_p(\sum \bar Z_t^2) = o_p(\sqrt{n})$ so the martingale itself is also $o_p(\sqrt{n})$.
    \end{enumerate}
    Therefore, $$\sum Y_{3t} = \beta^2 \sigma^2  \bar Z_{t-1} + o_p(\sqrt{n}).$$

    \item $\sum Y_{4t} = \sum \lambda_t^2(\theta^*) Z_t$ is a martingale with predictable quadratic variation $\sum \lambda_t^4(\theta^*) \lesssim \sum \lambda_t^2(\theta^*) = o_p(\sqrt{n})$ again, so
    $$\sum Y_{4t} = o_p(\sqrt{n}).$$

    \item $\sum Y_{5t} = \sum N_t \frac{u_t^2}{1+u_t} \stackrel{\text{(A3)}}\le \sum N_t \frac{u_t^2}{\epsilon}  \stackrel{\text{(A3)}}\le \sum \frac{2L}{\epsilon}   u_t^2 \le  \frac{2L}{\epsilon} \sum \lambda^2_t(\theta^*)$. Recalling the sample mean approximation (A2) of $\lambda_t(\theta^*)$, we see that
    $$ \sum Y_{5t} = o_p(\sqrt{n}). $$
\end{enumerate}
Combining these bounds, the score satisfies:
\[
   S_n(\theta^*) = \sum_{t=1}^n \Bigl( -\beta Z_t - \beta \bar Z_{t-1} + \beta^2 \sigma^2 \bar Z_{t-1} \Bigr) + o_p(\sqrt{n}) = \sum_{t=1}^n \Bigl( -\beta Z_t - \rho \bar Z_{t-1} \Bigr) + o_p(\sqrt{n})
\]
where $\rho := \beta(1 - \beta\sigma^2)$. Note that the ``sum of sample mean'' part is a weighted, triangular array sum
\[
   \sum_{t=1}^n \bar Z_{t-1} = \sum_{t=2}^n \frac{1}{t-1} \sum_{i=1}^{t-1} Z_i = \sum_{i=1}^{n-1} Z_i \sum_{t=i+1}^n \frac{1}{t-1} = \sum_{i=1}^n Z_i \log \frac{n}{i} + \sum_{i=1}^n Z_i \left(H_n - H_i - \log \frac{n}{i} \right),
\]
where $ H_{n} = \sum_{t=1}^{n-1} 1/t$. The error term $\sum Z_i \left(H_n - H_i - \log \frac{n}{i} \right)$ above is a martingale with predictable quadratic variation that satisfies
\begin{align*}
   & \sum_{i=1}^n \left(H_n - H_i - \log \frac{n}{i} \right)^2 \le \sum_{i=1}^n \left((H_n - \log n) - (H_i - \log i)   \right)^2 \\
    \le& \sum_{i=1}^n \left(\frac{1}{2n} + \frac{1}{8n^2} + \frac{1}{2i} + \frac{1}{8i^2}\right)^2 = O(1),
\end{align*}
where the second inequality is due to $H_{n+1} = \log n + \gamma + 1/2n - \varepsilon_n $ where $0 \le \varepsilon_n \le 1/8n^2$ and $\gamma$ is the Euler–Mascheroni constant. This shows that $\sum Z_i \left(H_n - H_i - \log \frac{n}{i} \right) = o_p(\sqrt n)$. We thus see that
\begin{equation}
    S_n(\theta^*) = \sum_{i=1}^n Z_i \left( -\beta - \rho \log \frac{n}{i} \right) + o_p(\sqrt{n}),
\end{equation}
and the sum part's asymptotic variance can be approximated by an integral
\begin{align*}  
V_{\text{score}} &= \lim_{n\to\infty} \frac{1}{n} \sum_{i=1}^n \sigma^2 \left( -\beta - \rho \log\frac{n}{i} \right)^2 
    = \sigma^2 \int_0^1 \bigl( \beta - \rho \log x \bigr)^2 \d x \\
    &= \sigma^2 \left( \beta^2 - 2\beta\rho \int_0^1 \log x \d x + \rho^2 \int_0^1 (\log x)^2 \d x \right) \\
    &= \sigma^2 \bigl( \beta^2 + 2\beta\rho + 2\rho^2 \bigr).
\end{align*}
It is clear that the triangular array $ Z_i \left( -\beta - \rho \log \frac{n}{i} \right)$ satisfies the Lyapunov condition. Therefore, via the Lindeberg-Feller CLT, $n^{-1/2}S_n(\theta^*)$ converges in distribution to $\mathcal N(0, V_{\text{score}})$, where, recalling $\rho = \beta(1 - \beta\sigma^2)$,
\[
    V_{\text{score}} = \sigma^2 \bigl( \beta^2 + 2\beta\rho + 2\rho^2 \bigr) = \beta^2\sigma^2 \bigl( 5 - 6\beta\sigma^2 + 2\beta^2\sigma^4 \bigr).
\]

\paragraph{Step 2: LLN for the score derivative at $\theta^*$}
The derivative of the score summand is:
\[
S_n(\theta)  = \sum_{t=1}^n \frac{N_t(\theta)}{D_t(\theta)}, \quad  \left[ \frac{N_t(\theta)}{D_t(\theta)} \right]' = \frac{N_t'(\theta)D_t(\theta) - N_t(\theta)D_t'(\theta)}{D_t(\theta)^2}
\]
where $N_t(\theta) = \lambda_t'(\theta)(X_t - \theta) - \lambda_t(\theta)$ and $D_t(\theta) = 1 + \lambda_t(\theta)(X_t - \theta)$. 
Notice that $D_t'(\theta) = \lambda_t'(\theta)(X_t - \theta) - \lambda_t(\theta) = N_t(\theta)$. Thus, evaluating at $\theta^*$, the average can be decomposed into:
\begin{align*}
    S_n'(\theta^*) =\sum_{t=1}^n\frac{N_t'(\theta^*)}{D_t(\theta^*)} - \sum_{t=1}^n\frac{N_t(\theta^*)^2}{D_t(\theta^*)^2} =: \sum_{t=1}^n W_{1t} - \sum_{t=1}^n W_{2t}. 
\end{align*}
Let us analyze these two sums.

\begin{enumerate}
    \item $\sum W_{1t}$ can be further broken down into several parts. To do this, we use the identity $(1+u)^{-1} = 1 - u + \frac{u^2}{1+u}$ again and we have
    \[
    W_{1t} =  \frac{N_t'(\theta^*)}{1 +\lambda_t(\theta^*)Z_t } = N_t'(\theta^*) \left(1 - \lambda_t(\theta^*)Z_t + \frac{u_t^2}{1+u_t}\right),
    \]
    where $u_t = \lambda_t(\theta^*)Z_t$. By $N_t'(\theta^*) = \lambda_t''(\theta^*)Z_t - 2\lambda_t'(\theta^*)$, we have:
    \begin{align*}
        W_{1t} = & -2\lambda_t'(\theta^*) + \lambda_t''(\theta^*)Z_t + 2\lambda_t'(\theta^*) \lambda_t(\theta^*)Z_t - \lambda_t''(\theta^*) \lambda_t(\theta^*)Z_t^2 + \frac{u_t^2}{1+u_t} N_t'(\theta^*) \\
        =: & W_{11t} + W_{12t} + W_{13t} + W_{14t} + W_{15t}.
    \end{align*}
    
\begin{enumerate}
    \item $\frac{1}{n}\sum W_{11t} = \frac{1}{n}\sum (-2\lambda_t'(\theta^*)) \xrightarrow{p} 2\beta$ by (A1).
    \item $\frac{1}{n} \sum W_{12t} = \frac{1}{n}\sum \lambda_t''(\theta^*)Z_t \xrightarrow{p} 0$ because $\sum W_{12t}$ is a martingale with predictable quadratic variation $\sigma^2 \sum_{t=1}^n |\lambda_t''(\theta^*)|^2$ that is linearly bounded by (A3), so $\sum W_{12t} = o_p(\sqrt{n})$.
    \item $\frac 1 n \sum W_{13t} = \frac{1}{n} \sum 2\lambda_t'(\theta^*)\lambda_t(\theta^*)Z_t \xrightarrow{p} 0$ for the same reason.
    \item $\frac 1 n \sum W_{14t} =  \frac{1}{n}\sum \bigl(-\lambda_t''(\theta^*)\lambda_t(\theta^*)Z_t^2\bigr) \xrightarrow{p} 0$ because the boundedness of $\lambda_t''(\theta^*)$ from (A3), boundedness of $Z_t^2$, and the sample average approximation of $\lambda_t(\theta^*)$ from (A2) jointly imply that $\sum \bigl(-\lambda_t''(\theta^*)\lambda_t(\theta^*)Z_t^2\bigr) = O_p(\sqrt{n})$.
    \item $\frac 1 n \sum W_{15t} = \frac{1}{n}\sum \frac{u_t^2}{1+u_t}N_t'(\theta^*) \xrightarrow{p} 0$ because $\sum W_{15t} = o_p(\sqrt{n})$, similar to Term 5 in Step 1.
\end{enumerate}
The breakdown above shows that
\[
\frac{1}{n} \sum W_{1t} \xrightarrow{p} 2\beta.
\]
\item $\sum W_{2t}$ can be further broken down into several parts as well. To do this, we use the fact that there exists $K > 0$ such that $\left |(1+u)^{-2} - 1\right| \le  K|u|$ for all $u>- 1 + \epsilon$. So
    \[
    W_{2t} =  \frac{N_t^2(\theta^*)}{(1 +\lambda_t(\theta^*)Z_t)^2 } = N_t^2(\theta^*) \left(1  + R( u_t) \right), \quad |R(u_t^2)| \le K |u_t|
    \]
    where $u_t = \lambda_t(\theta^*)Z_t$ satisfies $u_t \ge -1+\epsilon$ by (A3). With
$N_t^2(\theta^*) = \lambda_t'(\theta^*)^2 Z_t^2 - 2\lambda_t'(\theta^*)\lambda_t(\theta^*)Z_t + \lambda_t^2(\theta^*)$, we have
\begin{align*}
    W_{2t} &= \lambda_t'(\theta^*)^2 Z_t^2 - 2\lambda_t'(\theta^*)\lambda_t(\theta^*)Z_t + \lambda_t^2(\theta^*) + N_t^2(\theta^*) R(u_t) \\
    &=: W_{21t} + W_{22t} + W_{23t} + W_{24t}.
\end{align*}
\begin{enumerate}
    \item $\frac{1}{n} \sum W_{21t} = \frac{1}{n}\sum \lambda_t'(\theta^*)^2 Z_t^2 \xrightarrow{p} \beta^2\sigma^2$ by $Z_t^2$'s SLLN and (A1).
    \item $\frac 1 n \sum W_{22t} = - \frac 1 n \sum W_{13t} \xrightarrow{p} 0$.
    \item $\frac 1 n \sum W_{23t} = \frac{1}{n}\sum_{t=1}^n \lambda_t^2(\theta^*) \xrightarrow{p} 0$ by (A2).
    \item $\frac 1 n \sum W_{24t} \xrightarrow{p} 0$ due to $|W_{24t}| \le K N_t^2(\theta^*) u_t^2 \stackrel{\text{(A3)}}\le 4K L^2 u_t^2 \le 4K L^2 \lambda_t^2(\theta^*) $ and (A2).
\end{enumerate}
The breakdown above shows that
\[
\frac{1}{n} \sum W_{2t} \xrightarrow{p} \beta^2 \sigma^2.
\]
\end{enumerate}

Combining both parts, we establish the WLLN limit:
\[
    \frac{1}{n}S_n'(\theta^*) \xrightarrow{p} 2\beta - \beta^2 \sigma^2.
\]

\paragraph{Step 3: Remainder}
By assumption (A3), $S_n''(\theta)/n$ is uniformly bounded in probability near $\theta^*$. Given $\tilde\theta_n^{(E)} \xrightarrow{p} \theta^*$, the second-order Taylor remainder $\frac{1}{2n}S_n''(\bar\theta_n)(\tilde\theta_n^{(E)} - \theta^*)$ is $o_p(1)$ and vanishes.

\vspace{.5em}

Assembling the three pieces into \eqref{eqn:taylor-score}, we apply Slutsky's theorem:
\[
   \sqrt{n}\bigl(\tilde\theta_n^{(E)} - \theta^*\bigr) \xrightarrow{d} \mathcal{N}\left(0, \frac{V_{\text{score}}}{(2\beta - \beta^2\sigma^2)^2}\right).
\]
Substituting $V_{\text{score}} =  \beta^2\sigma^2 \bigl( 5 - 6\beta\sigma^2 + 2\beta^2\sigma^4 \bigr)$, the proof is complete.
\end{proof}

\subsection{Proof of \cref{thm:saddle}} \label[appendix]{sec:pf-saddle}

\begin{proof}
    Let the objective function be denoted by 
    \[ f(\theta, \lambda) = \sum_{t=1}^n \log(1 + \lambda(X_t - \theta)), \]
    and let the permissible domain for $\lambda$ given $\theta$ be $\Lambda(\theta) = [-\frac{1}{1-\theta}, \frac{1}{\theta}]$. We wish to find $\argmin_{\theta \in [0,1]} \sup_{\lambda \in \Lambda(\theta)} f(\theta, \lambda)$.

    First, observe that for any candidate parameter $\theta \in [0,1]$, the choice $\lambda = 0$ is always valid (i.e., $0 \in \Lambda(\theta)$). Evaluating the function at $\lambda = 0$ gives:
    \[ f(\theta, 0) = \sum_{t=1}^n \log(1 + 0) = 0. \]
    Therefore, the inner supremum is always bounded below by $0$:
    \begin{equation} \label{eqn:saddle_lower_bound}
        \sup_{\lambda \in \Lambda(\theta)} f(\theta, \lambda) \ge f(\theta, 0) = 0 \quad \text{for all } \theta \in [0,1].
    \end{equation}

    Now, let us evaluate the inner supremum at the specific choice $\theta = \bar{X}_n$. Because all observations $X_t \in [0,1]$, their empirical mean $\bar{X}_n$ is also in $[0,1]$. We consider the function of $\lambda$ at this point:
    \[ h(\lambda) := f(\bar{X}_n, \lambda) = \sum_{t=1}^n \log(1 + \lambda(X_t - \bar{X}_n)). \]
    
    The function $h(\lambda)$ is strictly concave in $\lambda$ over its domain (unless all $X_t$ are identical, in which case $h(\lambda) = 0$ for all $\lambda$, trivially achieving a supremum of $0$). We compute the first derivative of $h(\lambda)$ with respect to $\lambda$:
    \[ h'(\lambda) = \sum_{t=1}^n \frac{X_t - \bar{X}_n}{1 + \lambda(X_t - \bar{X}_n)}. \]
    
    Evaluating this derivative at $\lambda = 0$ yields:
    \[ h'(0) = \sum_{t=1}^n (X_t - \bar{X}_n) = \sum_{t=1}^n X_t - n \bar{X}_n = 0. \]
    
    Because $h(\lambda)$ is a concave function, a critical point where the first derivative is zero must be a global maximum. Consequently, the maximum of $h(\lambda)$ over $\Lambda(\bar{X}_n)$ is achieved at $\lambda = 0$:
    \begin{equation} \label{eqn:saddle_mean_eval}
        \sup_{\lambda \in \Lambda(\bar{X}_n)} f(\bar{X}_n, \lambda) = h(0) = 0.
    \end{equation}

    Combining \eqref{eqn:saddle_lower_bound} and \eqref{eqn:saddle_mean_eval}, we see that the function $\theta \mapsto \sup_{\lambda} f(\theta, \lambda)$ attains its global minimum of $0$ at $\theta = \bar{X}_n$. Thus, the saddle point ME-estimator is precisely the sample mean.
\end{proof}

\end{document}

%% file: macros.tex
\renewcommand{\le}{\leqslant}

\renewcommand{\ge}{\geqslant}

\newcommand{\id}{\mathbbmss 1}

\newcommand {\matl}{\left[ \begin{matrix}}
\newcommand {\matr}{\end{matrix}\right]}
\newcommand {\Exp}{ \mathbb E }

\newcommand {\sg}{\operatorname{sgn}}
\renewcommand {\Pr}{ \mathbb P }

\newcommand{\blamb}{\boldsymbol{\lambda}}

\renewcommand{\d}{\mathrm{d}}
\newcommand{\iid}{\overset{\mathrm{iid}}{\sim}}

\newcommand{\cP}{\mathcal{P}}

\newcommand{\CS}{\operatorname{CS}}
\newcommand{\CI}{\operatorname{CI}}
\newcommand{\diam}{\operatorname{diam}}

\DeclareMathAlphabet{\mathbbmsl}{U}{bbm}{m}{sl}

\DeclareMathOperator*{\argmin}{arg\,min}

\newcommand{\tr}{\operatorname{tr}}

\newcommand{\klinf}{\operatorname{KL}_{\inf}}